\documentclass[a4paper,twocolumn,11pt,unpublished]{quantumarticle}
\pdfoutput=1
\usepackage[utf8]{inputenc}
\usepackage[english]{babel}
\usepackage[T1]{fontenc}
\usepackage{amsmath}
\usepackage{hyperref}

\usepackage{tikz}
\usepackage{lipsum}

\usepackage{braket}
\usepackage{quantikz}
\usepackage{amssymb}
\usepackage{amsthm}

\newtheorem{theorem}{Theorem}

\newtheorem{proposition}[theorem]{Proposition}
\newtheorem{lemma}[theorem]{Lemma}
\newtheorem{definition}[theorem]{Definition}

\numberwithin{equation}{section}

\begin{document}

\title{Categorical Tensor-Graph Semantics for Quantum Algorithms }

\author{Naihong Hu}
\email{nhhu@math.ecnu.edu.cn}
\affiliation{School of Mathematical Sciences, MOE Key Laboratory of Mathematics and Engineering Applications $\&$ Shanghai Key Laboratory of PMMP, East China Normal University, Shanghai 200241, China}
\orcid{0000-0001-8162-8155}
\author{Ruining Li}
\email{737857625@qq.com}
\affiliation{School of Mathematical Sciences, MOE Key Laboratory of Mathematics and Engineering Applications $\&$ Shanghai Key Laboratory of PMMP, East China Normal University, Shanghai 200241, China}
\author{Futao Wang}
\email{52280155005@stu.ecnu.edu.cn}
\affiliation{School of Mathematical Sciences, MOE Key Laboratory of Mathematics and Engineering Applications $\&$ Shanghai Key Laboratory of PMMP, East China Normal University, Shanghai 200241, China}

\maketitle

\begin{abstract}
  
This paper investigates foundational quantum computing protocols from the intuitive perspective of categorical tensor-graph semantics within the category \textbf{FHilb}. While conventional Hilbert-space formalisms often conceal the structural nature of quantum algorithms behind high-dimensional matrix operations, the topological framework directly encodes algorithmic functionalities into their graphical skeletons. We provide a comprehensive topological reinterpretation of the Bernstein--Vazirani and Simon algorithms, demonstrating how topological transformations distill their core mathematical essence and clarify the operational mechanisms of oracles. Going beyond the standard qubit model, we construct explicit representations for the qutrit-adapted topological Deutsch--Jozsa and single-shot Grover algorithms. In particular, we establish a necessary and sufficient condition for the single-shot Grover search. We further implement CNOT gates via complementary Frobenius structures and investigate a diagrammatic decomposition scheme for the W-state preparation protocol. By bridging tensor category theory with practical quantum algorithmic design, this work furnishes a composable, scalable diagrammatic toolkit essential for automated circuit optimization across the evolving quantum hardware ecosystem.
\end{abstract}

\section{Introduction}

Since its inception by Feynman and Deutsch in the 1980s, quantum computing has evolved from abstract models toward physical realization. Presently, platforms by Google, IBM, IonQ, and Rigetti have achieved tens to hundreds of qubits—Google claimed "quantum supremacy" in 2019, while IBM launched the 1000+ qubit Condor processor, marking the NISQ era  \cite{NC11, BK23}. However, traditional Hilbert-space formal methods, though rigorous, involve high-dimensional matrix operations that obscure the intuitive structure of entanglement, superposition, and interference, and struggle with combinatorial complexity. As hardware advances rapidly, more intuitive and compositional tools are urgently needed.

Categorical quantum mechanics addresses this challenge. Under severe NISQ constraints—limited qubits, short coherence times, and high noise—minimizing circuit depth and gate count is crucial. ZX-calculus transforms complex circuit equivalences into intuitive topological operations via graphical rewriting; phase teleportation enables non-local cancellation of non-Clifford phases, significantly reducing costly T-gates \cite{KvdW20}. Recent work combining reinforcement learning with ZX-calculus achieves continuous optimization at 80-qubit, 2100-gate scales, providing a new automated paradigm for large-scale compilation. Open-source libraries PyZX and Quantomatic have enabled practical deployment and commercial integration  \cite{KZ15}. Graphical languages also naturally describe lattice surgery and error mitigation protocols for fault-tolerant computing  \cite{DH19, DPZ24}.

This methodology is expanding across domains. In quantum natural language processing, the DisCoCat model unifies syntax and semantics via tensor networks, leveraging entanglement for semantic composition  \cite{CCS10, MGT20}. In quantum machine learning, diagrammatic differentiation enables automatic differentiation for parameterized circuits \cite{TYF21}. During 2024–2025, advances combining graph neural networks with reinforcement learning (e.g., PPO-based policy optimization) and dynamic grouping strategies have further enhanced scalability.

In summary, categorical quantum mechanics and diagrammatic semantics provide: (1) new mathematical perspectives linking quantum computing to topology, logic, and computer science \cite{HV19,  KM19,CDKW12, Vic13}; (2) intuitive, compositional, and automatable tools for algorithm design; and (3) full-process support for NISQ-era compilation while opening cross-disciplinary pathways with AI and NLP. As hardware scales expand, systematic development of these theories and tools is critical for transitioning quantum computing from laboratory to practice.

The quest for an intuitive and rigorous quantum formalism is deeply rooted in the mathematical development of tensor categories \cite{Tur94, BK01, EGNO05}. In pure mathematics, the theory of tensor categories has not only achieved significant success in areas such as topological invariants but has also fundamentally proven the strict algebraic equivalence between diagrammatic reasoning and planar topological deformations. This profound mathematical foundation validates categorical graphical semantics, ensuring that topological deformations in quantum computing are not merely heuristic illustrations, but exact algebraic evaluations. 

Building upon Abramsky and Coecke's pioneering Categorical Quantum Mechanics (CQM) \cite{AC04} and the subsequent ZX-calculus \cite{CK17}, diagrammatic semantics have established a powerful foundation for quantum protocols. Furthermore, the theoretical boundaries of this framework continue to expand, as demonstrated by Majid's formulation of a quantum and braided ZX-calculus that generalizes these diagrammatic techniques to braided monoidal categories \cite{Maj23}. Directly inspired by Vicary's foundational exploration of the topological structure of quantum algorithms \cite{Vic13}, our work leverages tensor-graph semantics within the category \textbf{FHilb} to fundamentally dissect and reinterpret complex quantum computation. Rather than treating graphical rules merely as simplification tools, we utilize rigorous topological deformations to distill the algorithmic core of the Bernstein--Vazirani and Simon algorithms, the qutrit-adapted generalized Deutsch--Jozsa and single-shot Grover algorithms, as well as complex multi-particle entanglement protocols like the W-state preparation.

This study employs categorical topological language to describe quantum algorithms and quantum mechanics, revealing the internal mechanism of quantum advantage through tensor diagrams and demonstrating operational procedures via graphical expressions.

In Section 2, we focus on the tensor diagram semantics of the Bernstein-Vazirani algorithm. We introduce the algorithm's formal definition and quantum circuit, provide its tensor diagram semantics, and prove their correctness using transformation principles. Topological transformations are then applied to derive the simplest graphical form, and a comparison with the original circuit reveals the algorithm's mathematical essence from the tensor diagram perspective.

In Section 3, we address the Simon algorithm analogously. Tensor diagram semantics are given, verified, and simplified via topological transformations, with a comparison to the original circuit to analyze its mathematical connotation. A comparative analysis of both algorithms' simplified semantics explains their similarities and essential differences.

In Section 4, going beyond the binary qubit paradigm, we investigate multi-valued quantum systems. We provide the qutrit-adapted generalized Deutsch-Jozsa algorithm and the generalized single-shot Grover algorithm, mapping their execution directly to topological diagrams.

In Section 5, we analyze the controlled-NOT gate and quantum entanglement from the perspective of categorical composition. We provide an alternative diagrammatic proof for the properties of complementary Frobenius structures. Furthermore, we translate the standard W-state preparation circuit into the ZX-calculus framework and perform a systematic simplification using phase gadgets, local complementation and pivot rule. We demonstrate that this topological reduction yields an expression distinct from existing standard formulations.

\section{Topological Bernstein-Vazirani}\label{pre}
\subsection{Introduction}
In this section, we will use the knowledge of categorical quantum mechanics mentioned in Appendix A to reinterpret the Bernstein-Vazirani algorithm from a tensor diagram perspective. This formulation avoids the complex computational processes in traditional circuit diagrams, converts the connections between different qubits into topological structures in tensor diagrams, and directly attributes the algorithm's functionality to its topological skeleton.

The Bernstein-Vazirani algorithm considers the analysis of a function with a finite number of input bits and a single output bit. Suppose we have a hidden binary string $s\in \{0, 1\}^{n}$, and the function $f: \{0, 1\}^{n} \rightarrow \{0, 1\}$ satisfies $f(x)=x \cdot s$. The Bernstein-Vazirani problem is to find the string $s$.

In the classical case, to obtain the value of an $n$ -bit string $s$, we need to measure $n$ times. However, the quantum Bernstein-Vazirani algorithm only needs to access the function $f$ once to obtain the values of all $n$  bits. Its quantum circuit is as follows and inspired by  \cite{Vic13}, we give the topological version of Bernstein-Vazirani algorithm.
\begin{center}
\begin{equation}\label{BV2}
\begin{tikzpicture}[thick, scale=0.6]

    \draw (0,2) -- (2.3,2);     
    \draw (3.5,2) -- (4.2,2);   
    \draw (5.8,2) -- (6.3,2);   
    \draw (7.5,2) -- (10.5,2);  

    \draw (0,0) -- (1.5,0);     
    \draw (2.3,0) -- (2.8,0);   
    \draw (3.6,0) -- (4.2,0);   
    \draw (5.8,0) -- (7.8,0);   
    \draw (8.6,0) -- (10.5,0);  

    \node at (-0.5,2) {$|0\rangle$};
    \node at (-0.5,0) {$|0\rangle$};

    \draw (0.3,1.8) -- (0.7,2.2);
    \node at (0.6,2.4) {\small $n$};

    \draw (2.3,1.6) rectangle (3.5,2.4);
    \node at (2.9,2) { $H^{\otimes n}$};

    \draw (1.5,-0.4) rectangle (2.3,0.4);
    \node at (1.9,0) {$H$};
    \draw (2.8,-0.4) rectangle (3.6,0.4);
    \node at (3.2,0) {$Z$};

    \draw (4.2,-0.5) rectangle (5.8,2.5);
    \node at (5,1) {\Large $U_f$};

    \draw (6.3,1.6) rectangle (7.5,2.4);
    \node at (6.9,2) {$H^{\otimes n}$};
    \draw (7.8,-0.4) rectangle (8.6,0.4);
    \node at (8.2,0) {$H$};

    \draw (10.5,1.6) rectangle (11.3,2.4);
    \draw (10.55,1.8) arc (150:30:0.4); 
    \draw[->] (10.7,1.7) -- (11.1,2.2); 

    \node at (11,0) {$|1\rangle$};

    \begin{scope}[every node/.style={inner sep=1pt}]
        \node at (1.1,-1) (p0) {$|\psi_0\rangle$};
        \draw[->] (p0) -- (1.1,-0.5);

        \node at (3.8,-1) (p1) {$|\psi_1\rangle$};
        \draw[->] (p1) -- (3.8,-0.5);

        \node at (6,-1) (p2) {$|\psi_2\rangle$};
        \draw[->] (p2) -- (6,-0.5);

        \node at (7.5,-1) (p3) {$|\psi_3\rangle$};
        \draw[->] (p3) -- (7.5,-0.5);

        \node at (9.5,-1) (p4) {$|\psi_4\rangle$};
        \draw[->] (p4) -- (9.5,-0.5);
    \end{scope}
\end{tikzpicture}
\end{equation}
\end{center}
\begin{definition}[Topological Bernstein-Vazirani] \label{BVdef}
Let $A= ({\mathbb{C}}^2)^{\otimes n} , B= {\mathbb{C}}^2$. For an $n$-bit string $s$, take the computational basis on $(\mathbb{C}^2)^{\otimes n}$ as the Frobenius structure $(A, \mu_{B} ,  \eta_{B})$ on $A$, and the computational basis and $\sqrt{2}\mathrm{X}$ basis on $\mathbb{C}^2$ as the Frobenius structures $(B, \mu_{B} ,  \eta_{B})$ and $(B, \mu_{W} ,  \eta_{W})$ on $B$. Let $I \xrightarrow{b} B$ be the state $\binom{1 / \sqrt{2}}{-1 / \sqrt{2}}$. Thus for a given function $A \xrightarrow{f} B$, the tensor diagram of the Bernstein-Vazirani algorithm in $\mathbf{FHilb}$ is as follows:
\begin{center}
\begin{equation}\label{BV}
\begin{tikzpicture}[thick,scale=0.8]
    \begin{scope}[shift={(0,0)}]

        \draw (0.2,5.4) -- (-0.2,4.9) -- (0.6,4.9) -- cycle;
        \node at (0.2,5.1) {\small $y$};

        \draw (0.2,4.6) -- (0.2,4.9);

        \draw (-0.2,4) -- (0.7,4) -- (0.5,4.6) -- (-0.2,4.6)-- cycle;
        \node at (0.2,4.3) {H};
        \draw[dashed] (-0.2,3.7) to [out=0, in=180] (3.9,3.7); 

        \draw (0.2,4) -- (0.2,2.5);

        \filldraw[black] (1,1.5) circle (2.5pt);

        \draw (0.2,2.5) to [out=-90,in=180] (0.9,1.5);

        \draw (1.1,1.5) to [out=0,in=-90] (1.8,2.2);

        \draw (1,1.5) -- (1,0.8);
        \filldraw[black] (1,0.8) circle (2.5pt);

        \draw (1.4,2.2) -- (2.3,2.2) -- (2.1,2.8) -- (1.4,2.8) -- cycle;
        \node at (1.8,2.5) {\small f};

        \draw (1.8,2.8) to [out=90,in=180] (2.5,3.4);

        \node[draw, circle, fill=white, inner sep=2pt] at (2.5,3.4) {};

        \draw (2.5,5) -- (2.1,4.5) -- (2.9,4.5) -- cycle;
        \draw (2.5,4.5) -- (2.5,3.5);
        \node at (2.5,4.7) {\small b};

        \draw (2.6,3.4) to [out=0,in=90] (3.5,2.5) -- (3.5,1.5);
        \draw[dashed] (-0.2,1.3) to [out=0, in=180] (3.9,1.3); 

        \draw (3.5,1.5) -- (3.5,1.2)-- (3.5,1);
        \draw (3.5,0.5) -- (3.1,1) -- (3.9,1) -- cycle;
        \node at (3.5,0.8) {\small b};
    \node[draw, circle, inner sep=1pt] at (0, 0) {$\frac{1}{\sqrt{2^n}}$};
    \end{scope}
\end{tikzpicture}
\end{equation}
\end{center}
where the  $\sqrt{2}\mathrm{X}$ basis represents the $\mathrm{X}$ basis $\{\vert 0 \rangle +\vert 1 \rangle ,  \vert 0 \rangle -\vert 1 \rangle\}$ scaled by $\sqrt{2}$. This structure will be used multiple times in the following text, and we uniformly refer to it as the  $\sqrt{2}\mathrm{X}$ basis.
\end{definition}

The horizontal dashed lines in the diagram divide the algorithm into three stages: preparation, evolution, and measurement, with time proceeding from bottom to top.

\subsection{Correctness of the tensor diagram semantics of the Bernstein-Vazirani}
The diagram semantics of the preparation part is trivial. According to the transformation
\begin{center}
\begin{tikzpicture}[thick,scale=0.8]

    \begin{scope}[shift={(0,0)}]
        \draw (0,2.7) -- (0,1.8);
        \draw (-0.4,1.8) -- (-0.4,1.2) -- (0.5,1.2) -- (0.3,1.8) -- cycle;
        \node at (0,1.5) {$H$};
        \draw (0,1.2) -- (0,0.6);
        \draw (0,-0.1) -- (-0.6,0.6) -- (0.6,0.6) -- cycle;
        \node at (0.05,0.35) {\small $0^{\otimes n}$};
    \end{scope}

    \begin{scope}[shift={(1.5,0)}]
        \draw (0,2.7) -- (0,2.3);
        \draw (-0.4,2.3) -- (-0.4,1.7) -- (0.5,1.7) -- (0.3,2.3) -- cycle;
        \node at (0,2.0) {$Z$};
        \draw (0,1.7) -- (0,1.5);
        \draw (-0.4,1.5) -- (-0.4,0.9) -- (0.5,0.9) -- (0.3,1.5) -- cycle;
        \node at (0,1.2) {$H$};
        \draw (0,0.9) -- (0,0.6);
        \draw (0,-0.1) -- (-0.6,0.6) -- (0.6,0.6) -- cycle;
        \node at (0,0.3) {\small 0};
    \end{scope}

    \node at (2.5, 1.2) {\large $=$};

    \begin{scope}[shift={(3.5,0)}]
        \draw (0.2,2.7) -- (0.2,0.8);
        \filldraw[black] (0.2,0.8) circle (2.5pt); 
        \node[draw, circle, inner sep=1pt] at (-0.2,0.1) {\small $\frac{1}{\sqrt{2^n}}$};
    \end{scope}

    \begin{scope}[shift={(5.0,0)}]
        \draw (0,2.7) -- (0,0.5);
        \draw (0,-0.1) -- (-0.5,0.5) -- (0.5,0.5) -- cycle;
        \node at (0,0.3) {\small $b$};
    \end{scope}

\end{tikzpicture}
\end{center}
we can obtain the bottom part of Figure(\ref{BV}).

As for the unitary evolution part, the role of the oracle $U_{f}$ is actually to keep the state of the first register unchanged, and to perform modulo-2 addition on the result of applying function $f$ to the first register with the second register. So we might as well first make a copy of the first register, apply the function $f$ to the copied new object, so as to ensure that the first register is not changed while obtaining the desired function value $f$:
$$
\vert \psi_{11} \rangle \vert \psi_{12} \rangle \xrightarrow{\Delta_{B} \otimes id} \vert \psi_{11} \rangle \vert \psi_{11} \rangle \vert \psi_{12} \rangle \xrightarrow{id \otimes f \otimes id}  \vert \psi_{11} \rangle \vert f(x) \rangle \vert \psi_{12} \rangle.
$$
Next, we consider how to perform modulo-2 addition on the last two qubit states mentioned above. In fact, the result of function $f$ is either $\vert 0 \rangle$ or $\vert 1 \rangle$ in the computational basis, while at this time $\vert \psi_{12} \rangle = \frac{\vert 0 \rangle - \vert 1 \rangle}{\sqrt{2}}$ is a superposition of the two. If $f(x)=0$, then $\vert \psi_{12} \rangle$ does not change; if $f(x)=1$, then $\vert \psi_{12} \rangle$ flips phase to become $\frac{\vert 1 \rangle - \vert 0 \rangle}{\sqrt{2}}$. Therefore, the reason why the modulo-2 addition here only changes the global phase is that the second bit is in a superposition state with opposite signs belonging to the basis of the function result. So in order not to change the state of the second bit and to transform the result of the first bit into a scalar, we consider using the multiplication in the Frobenius structure $(B, \mu_{W} ,  \eta_{W})$ to which the state of the second bit belongs. Thus, simple calculation shows that if $f(x) = x \cdot s =0$, then

\begin{tikzpicture}[thick, scale=0.7]
    \begin{scope}[shift={(0,3.2)}]
        \draw (0,1.9) -- (0,2.5);
        \draw (-0.1,1.8) to [out=180,in=90] (-0.8,1);
        \draw (0.1,1.8) to [out=0,in=90] (0.8,1);
        \node[draw, circle, inner sep=1.5pt] at (0,1.8) {};
        \draw (-1.2,1) -- (-1.2,0.4) -- (-0.3,0.4) -- (-0.5,1) -- cycle;
        \node at (-0.8,0.7) {$f$};
        \draw (-0.8,0.4) -- (-0.8,0);
        \draw (-0.8,-0.7) -- (-1.3,0) -- (-0.2,0) -- cycle;
        \node at (-0.8,-0.2) {\small $\psi_{11}$};
        \draw (0.8,1) -- (0.8,0);
        \draw (0.8,-0.7) -- (0.3,0) -- (1.3,0) -- cycle;
        \node at (0.8,-0.2) {\small $\psi_{12}$};

        \node at (1.8,0.7) {\Large $=$};

        \begin{scope}[shift={(3.6,0)}]
        \draw (0,1.6) -- (0,2.5);
        \draw (-0.1,1.5) to [out=180,in=90] (-0.8,0.7);
        \draw (0.1,1.5) to [out=0,in=90] (0.8,0.7);
        \node[draw, circle, inner sep=1.5pt] at (0,1.5) {};
        \draw (-0.8,0.7) -- (-0.8,0.3);
        \draw (-0.8,-0.4) -- (-1.3,0.3) -- (-0.2,0.3) -- cycle;
        \node at (-0.8,0.1) {\small $x \cdot s$};
        \draw (0.8,0.7) -- (0.8,0.3);
        \draw (0.8,-0.4) -- (0.3,0.3) -- (1.3,0.3) -- cycle;
        \node at (0.8,0.1) {\small $\psi_{12}$};
        \end{scope}

        \node at (5.4,0.7) {\Large $=$};

        \begin{scope}[shift={(7.2,0)}]
        \draw (0,1.6) -- (0,2.5);
        \draw (-0.1,1.5) to [out=180,in=90] (-0.8,0.7);
        \draw (0.1,1.5) to [out=0,in=90] (0.8,0.7);
        \node[draw, circle, inner sep=1.5pt] at (0,1.5) {};
        \draw (-0.8,0.7) -- (-0.8,0.3);
        \draw (-0.8,-0.4) -- (-1.3,0.3) -- (-0.2,0.3) -- cycle;
        \node at (-0.8,0.1) {\small $0$};
        \draw (0.8,0.7) -- (0.8,0.3);
        \draw (0.8,-0.4) -- (0.3,0.3) -- (1.3,0.3) -- cycle;
        \node at (0.8,0.1) {\small $b$};
        \end{scope}
    \end{scope}


    \node at (-2, 1) {\Large $=$};
    \node at (-1.25, 1) {\Large $\frac{1}{2}$};

    \draw [thick] (-0.5, 2) arc (160:200:3);

    \begin{scope}[shift={(0,0)}]
        \draw (0.96,2) -- (0.96,1.68);
        \draw (0.96,1.12) -- (0.4,1.68) -- (1.52,1.68) -- cycle;
        \node at (0.96,1.48) {\fontsize{6}{12}\selectfont $\sqrt{2}+$};

        \draw (0.4,0.88) -- (0.4,0.56);
        \draw (0.4,1.44) -- (-0.08,0.88) -- (0.88,0.88) -- cycle;
        \node at (0.4,1.08) {\fontsize{6}{12}\selectfont $\sqrt{2}+$};

        \draw (0.4,0.56) -- (0.4,0.4);
        \draw (0.4,-0.16) -- (-0.08,0.4) -- (0.88,0.4) -- cycle;
        \node at (0.4,0.16) {\small 0};

        \draw (1.52,0.88) -- (1.52,0.56);
        \draw (1.52,1.44) -- (1.04,0.88) -- (2.0,0.88) -- cycle;
        \node at (1.52,1.08) {\fontsize{6}{12}\selectfont $\sqrt{2}+$};

        \draw (1.52,0.56) -- (1.52,0.4);
        \draw (1.52,-0.16) -- (1.04,0.4) -- (2.0,0.4) -- cycle;
        \node at (1.52,0.16) {\small b};
    \end{scope}

    \node at (2.8, 0.96) {\large $+$};

    \begin{scope}[shift={(3.6,0)}]
        \draw (0.96,2) -- (0.96,1.68);
        \draw (0.96,1.12) -- (0.4,1.68) -- (1.52,1.68) -- cycle;
        \node at (0.96,1.48) {\fontsize{6}{12}\selectfont $\sqrt{2}-$};

        \draw (0.4,0.88) -- (0.4,0.56);
        \draw (0.4,1.44) -- (-0.08,0.88) -- (0.88,0.88) -- cycle;
        \node at (0.4,1.08) {\fontsize{6}{12}\selectfont $\sqrt{2}-$};

        \draw (0.4,0.56) -- (0.4,0.4);
        \draw (0.4,-0.16) -- (-0.08,0.4) -- (0.88,0.4) -- cycle;
        \node at (0.4,0.16) {\small 0};

        \draw (1.52,0.88) -- (1.52,0.56);
        \draw (1.52,1.44) -- (1.04,0.88) -- (2.0,0.88) -- cycle;
        \node at (1.52,1.08) {\fontsize{6}{12}\selectfont $\sqrt{2}-$};

        \draw (1.52,0.56) -- (1.52,0.4);
        \draw (1.52,-0.16) -- (1.04,0.4) -- (2.0,0.4) -- cycle;
        \node at (1.52,0.16) {\small b};
    \end{scope}

    \draw [thick] (6, 2) arc (20:-20:3);
\end{tikzpicture}

\begin{tikzpicture}[thick, scale=0.7]
    \node at (-2, 1) {\Large $=$};
    \node at (-1.25, 1) {\Large $\sqrt{2}$};
    \draw [thick] (-0.25, 2) arc (160:200:3);

    \begin{scope}[shift={(0,0)}]
        \draw (0.96,2) -- (0.96,1.68);
        \draw (0.96,1.12) -- (0.4,1.68) -- (1.52,1.68) -- cycle;
        \node at (0.96,1.48) {\small $+$};

        \draw (0.4,0.88) -- (0.4,0.56);
        \draw (0.4,1.44) -- (-0.08,0.88) -- (0.88,0.88) -- cycle;
        \node at (0.4,1.08) {\small $+$};

        \draw (0.4,0.56) -- (0.4,0.4);
        \draw (0.4,-0.16) -- (-0.08,0.4) -- (0.88,0.4) -- cycle;
        \node at (0.4,0.16) {\small 0};

        \draw (1.52,0.88) -- (1.52,0.56);
        \draw (1.52,1.44) -- (1.04,0.88) -- (2.0,0.88) -- cycle;
        \node at (1.52,1.08) {\small $+$};

        \draw (1.52,0.56) -- (1.52,0.4);
        \draw (1.52,-0.16) -- (1.04,0.4) -- (2.0,0.4) -- cycle;
        \node at (1.52,0.16) {\small b};
    \end{scope}

    \node at (2.8, 0.96) {\large $+$};

    \begin{scope}[shift={(3.6,0)}]
        \draw (0.96,2) -- (0.96,1.68);
        \draw (0.96,1.12) -- (0.4,1.68) -- (1.52,1.68) -- cycle;
        \node at (0.96,1.48) {\small $-$};

        \draw (0.4,0.88) -- (0.4,0.56);
        \draw (0.4,1.44) -- (-0.08,0.88) -- (0.88,0.88) -- cycle;
        \node at (0.4,1.08) {\small $-$};

        \draw (0.4,0.56) -- (0.4,0.4);
        \draw (0.4,-0.16) -- (-0.08,0.4) -- (0.88,0.4) -- cycle;
        \node at (0.4,0.16) {\small 0};

        \draw (1.52,0.88) -- (1.52,0.56);
        \draw (1.52,1.44) -- (1.04,0.88) -- (2.0,0.88) -- cycle;
        \node at (1.52,1.08) {\small $-$};

        \draw (1.52,0.56) -- (1.52,0.4);
        \draw (1.52,-0.16) -- (1.04,0.4) -- (2.0,0.4) -- cycle;
        \node at (1.52,0.16) {\small b};
    \end{scope}

    \draw [thick] (6, 2) arc (20:-20:3);


    \node at (-2, -1.5) {\Large $=$};
    \node at (-1.25, -1.5) {\Large $\sqrt{2}$};
    \draw [thick] (-0.5, -0.8) arc (160:200:2);

    \begin{scope}[shift={(0,0)}]
        \node at (0.2,-1.5) {\Large $0$};
        \node at (0.9,-1.5) {\Large $+$};
        \node at (1.7,-1.5) {\Large $\frac{1}{\sqrt{2}}$};
        \draw (2.75,-1.5) -- (2.75,-1);
        \draw (2.75,-2.06) -- (3.23,-1.5) -- (2.27,-1.5) -- cycle;
        \node at (2.75,-1.68) {\small $-$};
    \end{scope}
    \draw [thick] (3.5, -0.8) arc (20:-20:2);

        \node at (5, -1.5) {\Large $=$};
        \draw (6.25,-2.06) -- (6.73,-1.5) -- (5.77,-1.5) -- cycle;
        \node at (6.25,-1.68) {\small $-$};
        \draw (6.25,-1.5) -- (6.25,-1);

\end{tikzpicture}

The case for $f(x) = x \cdot s =1$ is similar. So the multiplication $\mu_{W}$ here is exactly the modulo-2 addition we need. Combining this with the previous copying process, we obtain the required oracle $U_{f}$.

The measurement part is also trivial according to the previous diagram operation rules. It is worth noting that in (\ref{BV}) we measure both systems. In fact, this does not affect the algorithm; on the contrary, measuring both systems allows us to understand this algorithm from a better perspective, ultimately revealing the contributions made by each part to ensure the success of the quantum algorithm.

Thus we have completed the full explanation of the tensor diagram for the Bernstein-Vazirani algorithm.

\subsection{Bernstein-Vazirani algorithm from a topological perspective}
Recall again the topological representation of the Bernstein-Vazirani algorithm given in Definition \ref{BVdef}. We have already proven its correctness in the previous text. Now we attempt to reduce it to its simplest form and elucidate the mathematical essence of the algorithm from the topological structure of this minimal representation.

By the copyability of $\sqrt{2}b$, we have
\begin{center}
\begin{tikzpicture}[thick, scale=1]\label{BV7}

    \begin{scope}[shift={(0,0)}]
        \node[draw, circle, inner sep=1.5pt] at (0,1.2) {};
        \draw (0,1.7) -- (0,1.3);
        \draw (-0.1,1.2) to [out=180,in=90] (-0.8,0.5);
        \draw (0.1,1.2) to [out=0,in=90] (0.8,0.5);
        \draw (0.8,0.5) -- (0.8,0.3);
        \draw (-0.8,0.5) -- (-0.8,0.2);
        \draw (0.8,-0.2) -- (0.4,0.3) -- (1.2,0.3) -- cycle;
        \node at (0.8,0.1) {\small $b$};
    \end{scope}

    \node at (2.3, 0.6) {\large $=$};

    \begin{scope}[shift={(4,0)}]
        \draw (0,1.5) -- (0,1.2);
        \draw (0,0.7) -- (-0.4,1.2) -- (0.4,1.2) -- cycle;
        \node at (0,1.0) {\small $b$};
        \draw (0,0.5) -- (-0.4,0) -- (0.4,0) -- cycle;
        \node at (0,0.2) {\small $b$};
        \draw (0,0) -- (0,-0.3);
    \end{scope}
\end{tikzpicture}
\end{center}
Substituting back into (\ref{BV}) yields
\begin{center}
\begin{equation}\label{BV8}
\begin{tikzpicture}[thick, scale=0.7]
    \begin{scope}[shift={(-0.5,-0.5)}]
        \node[draw, circle, inner sep=1pt] at (-0.5,-0.3) {\small $\frac{1}{\sqrt{2^n}}$};
        \draw (1,0) -- (1,0.5);
        \filldraw[black] (1,0.5) circle (2pt);
        \filldraw[black] (1,0) circle (2pt);
        \draw (1,0.5) to [out=180,in=-90] (0,1.2);
        \draw (1,0.5) to [out=0,in=-90] (2,1.2);
        \draw (-0.4,2.3) -- (-0.4,1.7) -- (0.5,1.7) -- (0.3,2.3) -- cycle; \node at (0,2) {$H$};
        \draw (0,2.3) -- (0,3.1);
        \draw (0,3.6) -- (-0.4,3.1) -- (0.4,3.1) -- cycle;
        \node at (0,3.3) {\small $y$};
        \draw (0,1.2) -- (0,1.7);
        \draw (1.6,1.8) -- (1.6,1.2) -- (2.5,1.2) -- (2.3,1.8) -- cycle;
        \node at (2,1.5) {$f$};
        \draw (2,1.8) -- (2,2.1);
        \draw (2,2.1) to [out=90,in=180] (2.9,2.8);
        \draw (3.1,2.8) to [out=0,in=90] (4,2.1);
        \node[draw, circle, fill=white, inner sep=1.5pt] at (3,2.8) {};
        \draw (3,3.7) -- (3.4,3.2) -- (2.6,3.2) -- cycle;
        \node at (3,3.4) {\small $b$};
        \draw (3,2.9) -- (3,3.2);
        \draw (4,2.1) -- (4,0.25);
        \draw (4,-0.25) -- (4.4,0.25) -- (3.6,0.25) -- cycle;
        \node at (4,0) {\small $b$};
    \end{scope}

    \node at (4.5, 1.2) {\large $=$};

    \begin{scope}[shift={(6,0)}]
        \node[draw, circle, inner sep=1pt] at (-0.5,-0.5) {\small $\frac{1}{\sqrt{2^{n-1}}}$};
        \draw (1,0) -- (1,0.5);
        \filldraw[black] (1,0.5) circle (2pt);
        \filldraw[black] (1,0) circle (2pt);
        \draw (1,0.5) to [out=180,in=-90] (0,1.2);
        \draw (1,0.5) to [out=0,in=-90] (2,1.2);
        \draw (-0.4,1.8) -- (-0.4,1.2) -- (0.5,1.2) -- (0.3,1.8) -- cycle; \node at (0,1.5) {$H$};
        \draw (0,1.8) -- (0,2.1);
        \draw (0,2.6) -- (-0.4,2.1) -- (0.4,2.1) -- cycle;
        \node at (0,2.3) {\small $y$};
        \draw (1.6,1.8) -- (1.6,1.2) -- (2.5,1.2) -- (2.3,1.8) -- cycle;
        \node at (2,1.5) {$f$};
        \draw (2,1.8) -- (2,2.1);
        \draw (2,2.6) -- (2.4,2.1) -- (1.6,2.1) -- cycle;
        \node at (2,2.3) {\small $b$};
    \end{scope}
\end{tikzpicture}
\end{equation}
\end{center}
If we view $y \circ H$ as a whole measurement, it can be further simplified to
\begin{center}
\begin{equation}\label{BV9}
\begin{tikzpicture}[thick, scale=0.9]
    \begin{scope}[shift={(0,0)}]
        \node[draw, circle, inner sep=0.5pt] at (-1,0) {\small $\frac{1}{\sqrt{2^{n-1}}}$};
        \draw (0.5,0.8) -- (0.5,1.2); \filldraw (0.5,0.8) circle (2pt);
        \draw (2,0.8) -- (2,1.2); \filldraw (2,0.8) circle (2pt);
        \draw[dashed] (0.5,0.8) to [out=-30, in=210] (2,0.8); 
        \draw (0.5,1.2) -- (0.5,1.5);
        \draw (0.5,2.15) -- (-0.1,1.5) -- (1.1,1.5) -- cycle; \node at (0.5,1.7) {\small $M$};
        \draw (1.6,1.8) -- (1.6,1.2) -- (2.5,1.2) -- (2.3,1.8) -- cycle; \node at (2,1.5) {$f$};
        \draw (2,1.8) -- (2,2.1);
        \draw (2,2.6) -- (1.6,2.1) -- (2.4,2.1) -- cycle; \node at (2,2.3) {\small $b$};
    \end{scope}
\end{tikzpicture}
\end{equation}
\end{center}
where the dashed line indicates that the two are in an entangled state rather than a product state, i.e.
\begin{center}
\begin{tikzpicture}[thick,scale=0.8]
    \filldraw (0,0) circle (2.5pt);
    \filldraw (1.5,0) circle (2.5pt);
    \draw (0,0) -- (0,0.6);
    \draw (1.5,0) -- (1.5,0.6);
    \draw[dashed] (0,0) to [out=-30, in=210] (1.5,0);

    \node at (2.7, 0.3) {\Large $=$};
    \node at (4.5, 0.3) {\large $\displaystyle \sum_{x} \vert x\rangle \vert x\rangle$};
\end{tikzpicture}
\end{center}
Thus we have obtained the simplest form (\ref{BV9}) of the tensor diagram semantics of the Bernstein-Vazirani algorithm after deformation.

Here we view the measurement $y \circ H$ as a whole measurement $M$, because for the simplest diagram semantics, we are not concerned with the specific value of a certain morphism or state, but rather focus on the overall ``change" which is precisely the important idea of category theory.

Obviously, we concern how the various components in the simplest form of the tensor diagram semantics correspond to the original quantum circuit, given that the two appear substantially different at first glance. Without any doubt, the first object space $(\mathbb{C}^2)^{\otimes n}$ corresponds to the first $n$ bits of the first register in the quantum circuit. Then why is there no single qubit of the second register in the diagram? In fact, this stems from the copyability of $\sqrt{2}b$ in the simplification process (\ref{BV7}) and our measurement of the second register, which causes the evolution process of the second object to become applying the effect $\langle b \vert$ after preparing the state $\vert b \rangle$. So from the overall perspective is a morphism $I \rightarrow I$, which is an empty diagram.
\begin{center}
\begin{tikzpicture}[thick]

    \draw (0,0.3) -- (0,0.7);
    \draw (0,1.2) -- (-0.4,0.7) -- (0.4,0.7) -- cycle;
    \node at (0,0.9) {\small $b$};

    \draw (0,0.3) -- (0,-0.3);

    \draw (0,-0.8) -- (-0.4,-0.3) -- (0.4,-0.3) -- cycle;
    \node at (0,-0.5) {\small $b$};

    \node at (1.5, 0.2) {\large $=$};

    \node at (2.5, 0) {};

\end{tikzpicture}
\end{center}
From another perspective, this also shows the second register is an introduced auxiliary qubit. Unlike the quantum circuit where only the first qubit is measured, the reason we measure both systems in the tensor diagram semantics is to more clearly demonstrate the auxiliary nature of the second register.

Although the effect $\langle y \vert$ and the function $f$ appear to be distributed across two distinct objects, tracing back along the time axis reveals that their initial states are in fact entangled—that is, their evolutionary origins are in a ``synchronized" state. Consequently, measuring the first object yields relevant information about the second object. The fundamental reason why measuring only the first object suffices to obtain complete information about the function $f$ is as follows: in the Bernstein-Vazirani algorithm, the action of the oracle not only leaves the relative phase of the auxiliary qubit unchanged, but also preserves all outcomes of the function $f$ in the global phase. This is precisely the manifestation of phase kickback: although the controlled-phase gate may employ the top qubit as the control terminal, the phase shift nonetheless affects the top qubit itself.
\begin{center}
\begin{tikzpicture}[thick, scale=0.8]
    \draw (0,0) rectangle (2,2);
    \node at (1,1) {\Large $U_f$};
    \draw (-0.8, 1.7) -- (0,1.7) node[pos=0, left] {\small $x$};
    \draw (2, 1.7) -- (2.8,1.7) node[pos=1, right] {\small $x$};
    \draw (-0.8, 0.3) -- (0,0.3) node[pos=0, left] {\small $y$};
    \draw (2, 0.3) -- (2.8,0.3) node[pos=1, right] {\small $y \oplus f(x)$};
    \node at (-1.2, 2.2) {\small $\text{control}$};

    \node at (1, -1) {$\updownarrow$};

    \node at (1, -2) {
        $\displaystyle \sum_{x} \frac{|x\rangle}{\sqrt{2^n}} \left[ \frac{|0\rangle - |1\rangle}{\sqrt{2}} \right]
        \xrightarrow{U_f}
        \sum_{x} \frac{(-1)^{s \cdot x}|x\rangle}{\sqrt{2^n}} \left[ \frac{|0\rangle - |1\rangle}{\sqrt{2}} \right]$
    };
\end{tikzpicture}
\end{center}

Hence, the measurement probability is non-zero if and only if $y=s$. By measuring the first object alone, one obtains complete information about the string $s$. The minimal tensor diagram semantics thus provides a lucid and intuitive explanation of the essential nature of the oracle in the Bernstein-Vazirani algorithm and of the algorithm's underlying idea—namely, the properties of quantum entanglement and quantum superposition.

\section{Topological Simon}
Building upon the tensor diagram semantics of the Bernstein-Vazirani algorithm presented in the previous section, we now analyze another classic protocol: Simon's algorithm..

\subsection{Introduction}
Simon's algorithm was proposed by Daniel R. Simon in 1994. It emerged at a critical turning point in the development of quantum computing—before this, although the Deutsch-Jozsa algorithm and the Bernstein-Vazirani algorithm had proven the query complexity advantages of quantum computing, the problems they addressed were still regarded as highly artificial theoretical constructions. However, Simon's algorithm can achieve a balance between ``naturalness" and ``structurality". It was the first to apply quantum acceleration to a structured problem with clear cryptographic relevance, building a key bridge for quantum computing to move from theoretical singularity to practical value. It requires only $O(n)$ quantum queries to determine the hidden period, while the optimal classical probabilistic algorithm requires an exponential number of queries. This exponential separation not only demonstrates the computational power of quantum parallelism and quantum entanglement, but more importantly, its problem structure directly inspired Peter Shor's quantum algorithms for integer factorization and discrete logarithms proposed at the end of 1994.

Simon's algorithm is also a promise problem related to a function. Suppose we have a hidden binary string $s\in \{0, 1\}^{n}$, and the function $f: \{0, 1\}^{n} \rightarrow \{0, 1\}^{n}$ satisfies $f(x)=f(y)$ if and only if $y = x \oplus s$. The Simon problem is also to find the hidden string $s$. Its quantum circuit is as follows.
\begin{center}
\begin{tikzpicture}[thick, scale=0.9]

    \node at (-1.8, 1.5) {$|0\rangle$};
    \node at (-1.8, 0.2) {$|0\rangle$};

    \draw (-1.5, 1.5) -- (-0.6, 1.5);
    \draw (-1.3, 1.35) -- (-1.0, 1.65); 
    \node at (-1.15, 1.8) {\small $n$};

    \draw (-0.6, 1.1) rectangle (0.4, 1.9);
    \node at (-0.1, 1.5) {$H^{\otimes n}$};

    \draw (0.4, 1.5) -- (1.5, 1.5);

    \draw (-1.5, 0.2) -- (1.5, 0.2);
    \draw (-1.3, 0.05) -- (-1.0, 0.35); 
    \node at (-1.15, 0.5) {\small $n$};

    \draw (1.5, -0.4) rectangle (3.5, 2.1);
    \node at (2.5, 0.85) {\Large $U_f$};

    \draw (3.5, 1.5) -- (4.2, 1.5);
    \draw (4.2, 1.1) rectangle (5.2, 1.9);
    \node at (4.7, 1.5) {$H^{\otimes n}$};
    \draw (5.2, 1.5) -- (6.2, 1.5);

    \draw (6.2, 1.1) rectangle (7.2, 1.9);
    \draw (6.4, 1.3) arc (150:30:0.4);
    \draw[->, >=stealth] (6.7, 1.2) -- (7.0, 1.7);

    \draw (3.5, 0.2) -- (6.2, 0.2);
    \draw (6.2, -0.2) rectangle (7.2, 0.6);
    \draw (6.4, 0) arc (150:30:0.4);
    \draw[->, >=stealth] (6.7, -0.1) -- (7.0, 0.4);

    \begin{scope}[every node/.style={inner sep=1pt}]
        \node at (-1.1, -1.2) (p0) {$|\psi_0\rangle$};
        \draw[->] (p0) -- (-1.1, -0.7);

        \node at (0.85, -1.2) (p1) {$|\psi_1\rangle$};
        \draw[->] (p1) -- (0.85, -0.7);

        \node at (3.8, -1.2) (p2) {$|\psi_2\rangle$};
        \draw[->] (p2) -- (3.8, -0.7);

        \node at (5.7, -1.2) (p3) {$|\psi_3\rangle$};
        \draw[->] (p3) -- (5.7, -0.7);
    \end{scope}
\end{tikzpicture}
\end{center}
\begin{definition}[Topological Simon] \label{simondef}
Let $A=(\mathbb{C}^2)^{\otimes n}$, with its Frobenius structure $(A, \mu_{B} ,  \eta_{B})$ taken as the computational basis. For an $n$-bit string $s$ and a given function $A \xrightarrow{f} B$, the tensor diagram semantics of Simon's algorithm in $\mathbf{FHilb}$ is as follows:
\begin{center}
\begin{equation}\label{def4.1}
\begin{tikzpicture}[thick, scale=0.9]

    \draw (0.3, 4.7) -- (-0.1, 4.1) -- (0.7, 4.1) -- cycle;
    \node at (0.3, 4.25) {\small $y_0$};
    \draw (0.3, 3.9) -- (0.3, 4.1);

    \draw (2.4, 4.2) -- (2, 3.6) -- (2.8, 3.6) -- cycle;
    \node at (2.4, 3.75) {\small $y_1$};
    \draw (2.4, 3.4) -- (2.4,3.6);

    \draw (-0.1, 3.9) -- (-0.1, 3.4) -- (0.8, 3.4) -- (0.6, 3.9) -- cycle;
    \node at (0.3, 3.6) {$H$};
    \draw (0.3, 1.4) -- (0.3, 3.4); 

    \draw (1.3, 2.1) -- (1.3, 1.6) -- (2.2, 1.6) -- (2, 2.1) -- cycle;
    \node at (1.7, 1.8) {$f$};
    \draw (1.7, 1.4) -- (1.7, 1.6); 
    \draw (1.7, 2.1) -- (1.7, 2.3); 

    \draw (0.3, 1.4) to [out=-90, in=180] (1, 0.8);
    \draw (1.7, 1.4) to [out=-90, in=0] (1, 0.8);
    \filldraw (1, 0.8) circle (2pt); 
    \draw (1, 0.8) -- (1, 0.2); 
    \filldraw (1, 0.2) circle (2pt); 

    \draw (3.1, 2.3) -- (3.1, 0.3);

    \node[draw, circle, inner sep=1pt] at (0, 0) {$\frac{1}{\sqrt{2^n}}$};

        \draw (3.1, -0.1) -- (2.7, 0.3) -- (3.5, 0.3) -- cycle;
        \node at (3.1, 0.1) {\fontsize{6}{12}\selectfont $0^{\otimes n}$};

    \draw (1.7, 2.3) to [out=90, in=180] (2.4, 2.9);
    \draw (2.4, 2.9) to [out=0, in=90] (3.1, 2.3);
    \draw (2.4, 2.9) -- (2.4, 3.4); 
    \draw[fill=white] (2.4, 2.9) circle (2pt); 
    \draw[dashed, gray] (-1, 3.1) -- (4, 3.1);
    \draw[dashed, gray] (-1, 0.6) -- (4, 0.6);

\end{tikzpicture}
\end{equation}
\end{center}
\end{definition}

\subsection{Correctness of the tensor diagram semantics of Simon algorithm}
The unitary gate $U_{f}:\vert x, y \rangle \rightarrow \vert x, y \oplus f(x) \rangle$ in the quantum circuit is identical to that of the Bernstein-Vazirani algorithm, but here the role of the function $f$ is not merely reflected in the amplitude of the quantum state, but directly changes the second register. Since the state of the first register does not change, we still first make a copy of $\vert \psi_{11} \rangle$, and then apply the function $f$ to the copied new system. Performing bitwise modulo-2 addition with ${\vert 0 \rangle}^{\otimes n}$ means that the result completely preserves the state $ \vert f(x) \rangle$. 
The topological representation of this transformation can be given as follows:
\begin{center}
\begin{tikzpicture}[thick, scale=0.8]

    \begin{scope}[shift={(0,0)}]
        \draw (1, 3.2) -- (1, 2.5);
        \node at (1, 3.5) {$|\varphi\rangle$};

        \draw (0.3, 1.3) rectangle (1.7, 2.5);
        \node at (1, 1.9) {\Large $\oplus$};

        \draw (0.6, 1.3) -- (0.6, 0.5);

        \begin{scope}[shift={(0.7, 0.4)}]
            \node[anchor=north] at (-0.1, 0) {\small $|\varphi\rangle$};
            \node[anchor=north] at (0.9, 0) {\small $|0\rangle^{\otimes n}$};
        \end{scope}
            \draw (1.4, 1.3) -- (1.4, 0.5);

    \end{scope}

    \node at (2.6, 1.9) {\large $=$};

    \begin{scope}[shift={(3.5,0)}]
        \draw (1, 3.2) -- (1, 2.4);
        \node at (1, 3.5) {$|\varphi\rangle$};
        \draw (1, 2.4) to [out=180, in=90] (0, 1.5);
        \draw (0, 1.5) -- (0, 0.5);
        \node[anchor=north] at (0, 0.5) {$|\varphi\rangle$};
        \draw (1, 2.4) to [out=0, in=90] (2, 1.5);
        \draw (2, 1.5) -- (2, 0.4);

        \draw[fill=white] (1, 2.4) circle (2pt);

        \draw (2, -0.1) -- (1.6, 0.4) -- (2.4, 0.4) -- cycle;
        \node at (2, 0.2) {\small $0$};
    \end{scope}

\end{tikzpicture}
\end{center}
Thus far, we have once again simulated the oracle $U_{f}$ required in the quantum circuit.

\subsection{Simon algorithm from a topological perspective}
By scalar calculation and the unitality of the monoid, the topological representation of Simon's algorithm given in Definition \ref{simondef} can be deformed into:
\begin{center}
\begin{tikzpicture}[thick, scale=0.75]

    \begin{scope}[shift={(0,0)}]
        \node[draw, circle, inner sep=1pt] at (-0.5,-0.3) {\small $\frac{1}{\sqrt{2^n}}$};
        \draw (1,0) -- (1,0.5);
        \filldraw[black] (1,0.5) circle (2pt);
        \filldraw[black] (1,0) circle (2pt);
        \draw (1,0.5) to [out=180,in=-90] (0,1.2);
        \draw (1,0.5) to [out=0,in=-90] (2,1.2);
        \draw (-0.4,2.3) -- (-0.4,1.7) -- (0.5,1.7) -- (0.3,2.3) -- cycle; \node at (0,2) {$H$};
        \draw (0,2.3) -- (0,3.1);
        \draw (0,3.6) -- (-0.4,3.1) -- (0.4,3.1) -- cycle;
        \node at (0,3.3) {\small $y_0$};
        \draw (0,1.2) -- (0,1.7);
        \draw (1.6,1.8) -- (1.6,1.2) -- (2.5,1.2) -- (2.3,1.8) -- cycle;
        \node at (2,1.5) {$f$};
        \draw (2,1.8) -- (2,2.1);
        \draw (2,2.1) to [out=90,in=180] (2.9,2.8);
        \draw (3.1,2.8) to [out=0,in=90] (4,2.1);
        \draw[fill=white] (3,2.8) circle (2pt);
        \draw (3,3.7) -- (3.4,3.2) -- (2.6,3.2) -- cycle;
        \node at (3,3.4) {\small $y_1$};
        \draw (3,2.9) -- (3,3.2);
        \draw (4,2.1) --  (4,0.25);
        \draw (4,-0.25) -- (4.4,0.25) -- (3.6,0.25) -- cycle;
        \node at (4,0.05) {\small $0$};

    \end{scope}

    \node at (5.2, 1.5) {\large $=$};

    \begin{scope}[shift={(7,0.5)}]
        \node[draw, circle, inner sep=1pt] at (-0.4,-0.25) {\small $\frac{1}{\sqrt{2^n}}$};
        \draw (1,0) -- (1,0.5);
        \filldraw[black] (1,0.5) circle (2pt);
        \filldraw[black] (1,0) circle (2pt);
        \draw (1,0.5) to [out=180,in=-90] (0,1.2);
        \draw (1,0.5) to [out=0,in=-90] (2,1.2);
        \draw (-0.4,1.8) -- (-0.4,1.2) -- (0.5,1.2) -- (0.3,1.8) -- cycle; \node at (0,1.5) {$H$};
        \draw (0,1.8) -- (0,2.1);
        \draw (0,2.6) -- (-0.4,2.1) -- (0.4,2.1) -- cycle;
        \node at (0,2.3) {\small $y_0$};
        \draw (1.6,1.8) -- (1.6,1.2) -- (2.5,1.2) -- (2.3,1.8) -- cycle;
        \node at (2,1.5) {$f$};
        \draw (2,1.8) -- (2,2.1);
        \draw (2,2.6) -- (2.4,2.1) -- (1.6,2.1) -- cycle;
        \node at (2,2.3) {\small $y_1$};
    \end{scope}

\end{tikzpicture}
\end{center}
Further using dashed lines to indicate that the two uniform superposition states are in an entangled state, we can obtain the simplest form of the topological representation of Simon's algorithm:
\begin{center}
\begin{tikzpicture}[thick]
    \begin{scope}[shift={(10.5,0)}]
        \node[draw, circle, inner sep=0.5pt] at (-1,0.8) {\small $\frac{1}{\sqrt{2^n}}$};
        \draw (0.5,0.8) -- (0.5,1.2); \filldraw (0.5,0.8) circle (2pt);
        \draw (2,0.8) -- (2,1.2); \filldraw (2,0.8) circle (2pt);
        \draw [dashed] (0.5,0.8) to [out=-30, in=210] (2,0.8); 
        \draw (0.5,1.2) -- (0.5,1.5);
        \draw (0.5,2) -- (0.1,1.5) -- (0.9,1.5) -- cycle; \node at (0.5,1.65) {\small $y_0$};
        \draw (1.6,1.8) -- (1.6,1.2) -- (2.5,1.2) -- (2.3,1.8) -- cycle; \node at (2,1.5) {$f$};
        \draw (2,1.8) -- (2,2.1);
        \draw (2,2.6) -- (1.6,2.1) -- (2.4,2.1) -- cycle; \node at (2,2.25) {\small $y_1$};
    \end{scope}

\end{tikzpicture}
\end{center}
Re-examining Simon's algorithm from the perspective of diagram semantics reveals a striking superficial resemblance to the minimal form of the Bernstein-Vazirani algorithm we previously presented; however, their underlying topological mechanisms differ fundamentally. In the minimal diagram semantics of Simon's algorithm, the first object likewise represents the first register, but the crucial distinction lies in the fact that the second object here directly represents the second register itself. This can be interpreted from two complementary angles: at the level of tensor diagram semantics, it originates from the unitality of the monoid in the simplification process; computationally, it stems from the fact that $(\vert 0 \rangle)^{\otimes n}$ serves as the identity element under modulo-2 addition.

Secondly, the disappearance of the third object during the transformation process arises from distinct mechanisms in each case: in one instance, it stems from measurement following state preparation; in the other, it derives from the construction of the black box involving the unitality of the Frobenius structure. Thus, despite their apparent structural similarity, the algorithmic functionality and topological essence remain fundamentally different.

\section{Qutrit-adapted generalized topological Deutsch-Jozsa and generalized topological single-shot Grover}

\subsection{Deutsch-Jozsa algorithm for qutrit}

While the Deutsch–Jozsa algorithm is well-established for the qubit case, its generalization to higher dimensions exists; in this work, we focus specifically on the qutrit case. To clarify these concepts, we first define the notions of constant and balanced functions.
\begin{definition}
An $r$-qutrit multi-valued function of the form $f: \{0,1,2\}^{r} \rightarrow \{0,1,2\}$ is \textbf{constant} when $f(x) = f(y)$ for all $x, y \in \{0,1,2\}^{r}$, and is \textbf{balanced} when an equal number of the $3^{r}$ domain values, namely $3^{r-1}$, is mapped to each of the $3$ elements in the codomain.
\end{definition}

Contrary to the qubit case , the Hadmard gate $H$ is replaced by quantum Fourier transform
$\mathcal{F}_{3} = \frac{1}{\sqrt{3}} \sum_{j=0}^{2} \sum_{k=0}^{2} e^{i2\pi j k / 3} | j \rangle \langle k |$. The following circuit implements the Deutsch-Jozsa algorithm for a qutrit.

\[
\tikzset{every picture/.style={line width=0.75pt}} 
\begin{tikzpicture}[x=0.555pt,y=0.55pt,yscale=-1,xscale=1]
\draw    (152,99) -- (229.92,99.27) ;
\draw   (230,86.26) -- (275.67,86.26) -- (275.67,112.26) -- (230,112.26) -- cycle ;
\draw    (274,99) -- (298.67,99.26) ;
\draw   (299,86.27) -- (379.26,86.27) -- (379.26,230.27) -- (299,230.27) -- cycle ;
\draw   (412,87.26) -- (457.67,87.26) -- (457.67,113.26) -- (412,113.26) -- cycle ;
\draw    (456,100) -- (480.67,100.26) ;
\draw    (380,100) -- (410.26,100.27) ;
\draw   (481,87.26) -- (526.67,87.26) -- (526.67,113.26) -- (481,113.26) -- cycle ;
\draw    (485,107) .. controls (496.26,95.27) and (510.26,99.27) .. (518.26,106.27) ;
\draw    (511.26,91.27) -- (497.26,107.27) ;
\draw    (153,212) -- (230.92,212.27) ;
\draw   (231,199.26) -- (276.67,199.26) -- (276.67,225.26) -- (231,225.26) -- cycle ;
\draw    (275,212) -- (299.67,212.26) ;
\draw    (378.26,214.27) -- (407.92,214.27) ;
\draw   (409,199.26) -- (454.67,199.26) -- (454.67,225.26) -- (409,225.26) -- cycle ;
\draw    (453,212) -- (477.67,212.26) ;
\draw    (172,86) -- (163.26,111.27) ;
\draw  [dash pattern={on 4.5pt off 4.5pt}]  (198.26,83.27) -- (199.26,232.27) ;
\draw  [dash pattern={on 4.5pt off 4.5pt}]  (286.26,85.27) -- (287.26,234.27) ;
\draw  [dash pattern={on 4.5pt off 4.5pt}]  (394.26,88.27) -- (395.26,237.27) ;
\draw  [dash pattern={on 4.5pt off 4.5pt}]  (470.26,87.27) -- (471.26,236.27) ;
\draw (115,91) node [anchor=north west][inner sep=0.75pt]   [align=left] {$\displaystyle \ket{0}$};
\draw (236,88) node [anchor=north west][inner sep=0.75pt]    {$\mathcal{F}_{3}^{\otimes r}$};
\draw (328,155) node [anchor=north west][inner sep=0.75pt]   [align=left] {$\displaystyle U_{f}$};
\draw (418,88) node [anchor=north west][inner sep=0.75pt]    {$\mathcal{F}_{3}^{\otimes r}$};
\draw (237,203.4) node [anchor=north west][inner sep=0.75pt]    {$\mathcal{F}_{3}$};
\draw (416,203.4) node [anchor=north west][inner sep=0.75pt]    {$\mathcal{F}_{3}$};
\draw (118,202) node [anchor=north west][inner sep=0.75pt]   [align=left] {$\displaystyle \ket{1}$};
\draw (493,202) node [anchor=north west][inner sep=0.75pt]   [align=left] {$\displaystyle \ket{2}$};
\draw (175,79) node [anchor=north west][inner sep=0.75pt]   [align=left] {$\displaystyle r$};
\draw (182,249) node [anchor=north west][inner sep=0.75pt]   [align=left] {$\displaystyle \ket{\psi _{1}}$};
\draw (270,250) node [anchor=north west][inner sep=0.75pt]   [align=left] {$\displaystyle \ket{\psi _{2}}$};
\draw (376,250) node [anchor=north west][inner sep=0.75pt]   [align=left] {$\displaystyle \ket{\psi _{3}}$};
\draw (457,250) node [anchor=north west][inner sep=0.75pt]   [align=left] {$\displaystyle \ket{\psi _{4}}$};
\end{tikzpicture}
\]

According to the quantum circuit, we write down the middle state.
\[
\hspace{-2.5em}
\begin{aligned}
\ket{\psi_{1}} &= |0\rangle^{\otimes r}\ket{1} \\
\xrightarrow{\mathcal{F}_{3}^{\otimes r+1}}
\ket{\psi_{2}} &= \frac{1}{\sqrt{3^{r}}}\sum_{x=0}^{3^{r}-1}\ket{x}\otimes
\frac{1}{\sqrt{3}}\sum_{y=0}^{2}e^{i2\pi y/3}\ket{y} \\
\xrightarrow{U_{f}}
\ket{\psi_{3}} &= \frac{1}{\sqrt{3^{r}}}\sum_{x=0}^{n^{r}-1}\ket{x}\otimes
\frac{1}{\sqrt{3}}\sum_{y=0}^{2}e^{i2\pi y/3}\ket{y\oplus f(x)}
\\
 &= \frac{1}{\sqrt{3^{r}}} \sum_{x=0}^{3^{r}-1} \ket{x} \otimes
                \frac{1}{\sqrt{3}} \sum_{y=0}^{2} e^{i2\pi[y-f(x)]/3} \ket{y} \\
              &= \frac{1}{\sqrt{3^{r}}} e^{-i2\pi f(x)/3} \sum_{x=0}^{3^{r}-1} \ket{x} \otimes
                \frac{1}{\sqrt{3}} \sum_{y=0}^{2} e^{i2\pi y/3} \ket{y}
\end{aligned}
\]

When $f$ is constant, $ e^{-i2\pi f(x)/3}$ is a global phase factor

\[
\begin{aligned}
\ket{\psi_{3}}
              =& \frac{1}{\sqrt{3^{r}}} e^{-i2\pi f(x)/3} \sum_{x=0}^{3^{r}-1} \ket{x} \otimes
                \frac{1}{\sqrt{3}} \sum_{y=0}^{2} e^{i2\pi y/3} \ket{y} \\
\xrightarrow{\mathcal{F}_{3}^{\otimes r+1}}
\ket{\psi_{4}} =&
\frac{1}{3^{r}} e^{-i2\pi f(x)/3} \sum_{j=0}^{3^{r}-1} \sum_{k=0}^{3^{r}-1} \sum_{x=0}^{3^{r}-1} e^{i2\pi jk/3} |j\rangle\langle k|x\rangle\\
&\otimes \sum_{j=0}^{2} \sum_{k=0}^{2} \sum_{y=0}^{2} e^{i2\pi(j \oplus 1)k/3} |j\rangle\langle k\ket{y} \\
\end{aligned}
\]

For a non-zero integer $j$, we have
$
\sum_{k=0}^{2} (e^{i2\pi /3})^{j k} = 0
$.

Thus,
\[
\begin{aligned}
\ket{\psi_{4}} =& \frac{1}{3^{r}} e^{-i2\pi f(x)/3} \sum_{j=0}^{3^{r}-1} \sum_{k=0}^{3^{r}-1} e^{i2\pi jk/3} \ket{j} \\
&\otimes
\frac{1}{3} \sum_{j=0}^{2} \sum_{k=0}^{2} e^{i2\pi(j \oplus 1)k/3} \ket{j}\\
=& e^{-i2\pi f(x)/3}|0\rangle^{\otimes r} |2\rangle
\end{aligned}
\]

And we observe $|0\rangle^{\otimes r}$ with probability $1$.

When $f$ is balanced, we consider the first $r$ qutrits:
\[
\begin{aligned}
|\psi_{3}\rangle_r &= \frac{1}{\sqrt{3^{r}}} \sum_{x=0}^{3^{r}-1} e^{-i2\pi f(x)/3} |x\rangle \\
  \xrightarrow{\mathcal{F}_{3}^{\otimes r}}  |\psi_{4}\rangle_r
& =  \displaystyle \frac{1}{3^{r}} \sum_{j=0}^{3^{r}-1} \sum_{k=0}^{3^{r}-1} \sum_{x=0}^{3^{r}-1} e^{i2\pi jk/3} e^{-i2\pi f(x)/3} |j\rangle \langle k | x \rangle \\
& =  \displaystyle \frac{1}{3^{r}} \sum_{j=0}^{3^{r}-1} \sum_{x=0}^{3^{r}-1} e^{i2\pi (jx - f(x))/3} |j\rangle.
\end{aligned}
\]

Note that $\sum_{x=0}^{3^{r}-1} e^{i2\pi ( - f(x))/3} |0\rangle=0$ holds for the balanced function, where the phases cancel. Consequently, the state $|0\rangle^{\otimes r}$ is observed with zero probability after measurement. In contrast, for a constant function, we do
 observe $|0\rangle^{\otimes r}$
with probability $1$.

\subsection{Topological Deutsch-Jozsa}
In Sections 4.2 and 4.3, we present new results concerning the topological Deutsch-Jozsa and single-shot Grover algorithms for multi-valued quantum systems, originally introduced by \cite{Vic13}. A complete set of simplification rules for this framework is established in \cite{HV19}, while a more concrete formulation can be found in \cite{Ran14}.

To say that a function $f: S \to \{0,1,2\}$ (where $\vert{}S\vert{} = 3^n$) is constant is equivalent to saying that it factors through a singleton set:

\[
\tikzset{every picture/.style={line width=0.75pt}} 

\]
The function $f$ is defined by a function $x$ with $f(s)=x(1)$ for all $s\in S$.
It has the following graphical expression.
\[
\tikzset{every picture/.style={line width=0.75pt}} 
%
\]

When $f$ is balanced, we have
\[
\left(%
\right) \circ f \circ \sum_{s \in S} |s\rangle = 0,
\]
 where $f \circ \sum_{s \in S} |s\rangle := \sum_{s \in S} |f(s)\rangle$, $|s\rangle$ is regarded as a column vector and $\omega = e^{2\pi i / 3}$. Furthermore, it can be reformulated as an effect
$ \langle 0 |+\omega^2\langle 1 |+(\omega^2)^2\langle 2 | .$ This yields the following graphical expression:

\[
\tikzset{every picture/.style={line width=0.75pt}} 
%
\]
At the heart of the topological Deutsch–Jozsa algorithm lies the unitary operator $U_f$, which is identical to that of the Bernstein-Vazirani algorithm:

\[
\tikzset{every picture/.style={line width=0.75pt}} 
%
\]
To execute the algorithm, we begin by initializing the input state:

\[
\tikzset{every picture/.style={line width=0.75pt}} 
%
\]
Following the implementation of the protocol and prior to measurement, the system yields the following diagrammatic expression:

\[
\tikzset{every picture/.style={line width=0.75pt}} 
%
\]
By applying the copy rule, we obtain the following expression:

\[
\tikzset{every picture/.style={line width=0.75pt}} 
%
\]
Next, we examine the behavior of the function $f$.

Under the condition that $f$ is constant, we have:

\[
\tikzset{every picture/.style={line width=0.75pt}} 
%
\]
Since $\omega^2 \circ x(1) = \omega^i$ for some $i \in \{0, 1, 2\}$, the expression can be simplified as follows:

\[
\tikzset{every picture/.style={line width=0.75pt}} 
%
\]
Therefore, the projective measurement on the first $r$ qutrits onto the first uniform superposition state yields a deterministic success.

Conversely, when $f$ is balanced, we project the first $r$ qutrits onto the uniform superposition state:

\[
\tikzset{every picture/.style={line width=0.75pt}} 
%
\]
The following result holds:

\[
\tikzset{every picture/.style={line width=0.75pt}} 
%
\right) \circ f \circ \sum_{s \in S} |s\rangle = 0$, the overall expression vanishes. Consequently, there is a zero probability of finding the first $r$ qutrits in the uniform superposition state upon measurement.

\subsection{The generalized topological single-shot Grover }
 The  generalized single-shot Grover algorithm is used to search a set $S$ for marked elements, defined in terms of an indicator function $S \xrightarrow{f}G$. Here we consider the case $G=\mathbb{Z}_n$ , especially the $\mathbb{Z}_3$ case.
\begin{definition}
Given an irreducible representation
$\rho$ of $G$ , we define the element $s \in S $ to be balanced if the following holds:
\[
\rho(f(s))=\frac{2}{|S|}\sum_{t\in S}\rho(f(t))
\]
\end{definition}

Notice that an effect can be interpreted as a type of representation. Below, we present the topological formulation of the generalized single-shot Grover algorithm for the $\mathbb{Z}_3$ case:

\[
\tikzset{every picture/.style={line width=0.75pt}} 
\begin{tikzpicture}[x=0.75pt,y=0.75pt,yscale=-1,xscale=1]
\draw    (304,172.76) -- (304,215.76) ;
\draw [shift={(304,215.76)}, rotate = 90] [color={rgb, 255:red, 0; green, 0; blue, 0 }  ][fill={rgb, 255:red, 0; green, 0; blue, 0 }  ][line width=0.75]      (0, 0) circle [x radius= 3.35, y radius= 3.35]   ;
\draw    (367.67,134.76) -- (367.67,217.76) ;
\draw    (284,144.76) .. controls (285.76,154.79) and (290.76,170.79) .. (304,172.76) ;
\draw [shift={(304,172.76)}, rotate = 8.43] [color={rgb, 255:red, 0; green, 0; blue, 0 }  ][fill={rgb, 255:red, 0; green, 0; blue, 0 }  ][line width=0.75]      (0, 0) circle [x radius= 3.35, y radius= 3.35]   ;
\draw    (304,172.76) .. controls (315.76,168.79) and (320.76,160.79) .. (324,144.76) ;
\draw    (324,124.76) .. controls (326.6,110.62) and (330.32,100.91) .. (341.9,95.63) ;
\draw [shift={(344,94.76)}, rotate = 339.17] [color={rgb, 255:red, 0; green, 0; blue, 0 }  ][line width=0.75]      (0, 0) circle [x radius= 3.35, y radius= 3.35]   ;
\draw    (347.67,94.76) .. controls (363.42,100.79) and (365.42,111.79) .. (367.67,134.76) ;
\draw    (344,47.42) -- (344,66.46) -- (344,90.42) ;
\draw   (367.11,236.76) -- (350.55,217.83) -- (383.67,217.83) -- cycle ;
\draw    (283.82,108.71) -- (284.07,146.67) ;
\draw   (274.07,87.67) -- (294.07,87.67) -- (294.07,107.67) -- (274.07,107.67) -- cycle ;
\draw    (283.92,47.87) -- (283.82,87.71) ;
\draw    (309.92,123.55) -- (330.41,123.96) ;
\draw    (309.92,123.55) -- (309.92,143.95) ;
\draw    (309.92,143.95) -- (339.92,144.35) ;
\draw    (330.41,123.96) -- (339.92,144.35) ;
\draw   (283.44,28.6) -- (266.89,47.52) -- (300,47.52) -- cycle ;
\draw   (217,186.02) .. controls (217,171.67) and (228.63,160.04) .. (242.98,160.04) .. controls (257.33,160.04) and (268.96,171.67) .. (268.96,186.02) .. controls (268.96,200.37) and (257.33,212) .. (242.98,212) .. controls (228.63,212) and (217,200.37) .. (217,186.02) -- cycle ;
\draw (283.67,99.75) node   [align=left] {$\displaystyle D$};
\draw (321.84,133.84) node   [align=left] {$\displaystyle f$};
\draw (367.74,224.84) node   [align=left] {$\displaystyle \omega $};
\draw (282.45,39.88) node   [align=left] {$\displaystyle \omega ^{2}$};
\draw (242.91,182.09) node   [align=left] {$\displaystyle \frac{1}{\sqrt{|S} |}$};
\end{tikzpicture}
\]
The key operator $D$ can be decomposed in the following manner:

\[
\tikzset{every picture/.style={line width=0.75pt}} 
\begin{tikzpicture}[x=0.75pt,y=0.75pt,yscale=-1,xscale=1]
\draw    (195.82,145.95) -- (196.07,183.91) ;
\draw   (186.07,124.91) -- (206.07,124.91) -- (206.07,144.91) -- (186.07,144.91) -- cycle ;
\draw    (195.92,85.95) -- (195.82,124.95) ;
\draw    (281,90) -- (280.92,182.95) ;
\draw    (368.76,89.04) -- (368.76,125.04) ;
\draw [shift={(368.76,125.04)}, rotate = 90] [color={rgb, 255:red, 0; green, 0; blue, 0 }  ][fill={rgb, 255:red, 0; green, 0; blue, 0 }  ][line width=0.75]      (0, 0) circle [x radius= 3.35, y radius= 3.35]   ;
\draw    (368.76,143.04) -- (368.76,181.04) ;
\draw [shift={(368.76,143.04)}, rotate = 90] [color={rgb, 255:red, 0; green, 0; blue, 0 }  ][fill={rgb, 255:red, 0; green, 0; blue, 0 }  ][line width=0.75]      (0, 0) circle [x radius= 3.35, y radius= 3.35]   ;
\draw   (321,135.43) .. controls (321,123.59) and (330.59,114) .. (342.43,114) .. controls (354.26,114) and (363.85,123.59) .. (363.85,135.43) .. controls (363.85,147.26) and (354.26,156.85) .. (342.43,156.85) .. controls (330.59,156.85) and (321,147.26) .. (321,135.43) -- cycle ;
\draw (195.67,137) node   [align=left] {$\displaystyle D$};
\draw (239,128) node [anchor=north west][inner sep=0.75pt]   [align=left] {=};
\draw (343.43,130.43) node   [align=left] {$\displaystyle \frac{2}{|S|}$};
\draw (364,68) node [anchor=north west][inner sep=0.75pt]   [align=left] {$\displaystyle S$};
\draw (364,191) node [anchor=north west][inner sep=0.75pt]   [align=left] {$\displaystyle S$};
\draw (192,63) node [anchor=north west][inner sep=0.75pt]   [align=left] {$\displaystyle S$};
\draw (191,192) node [anchor=north west][inner sep=0.75pt]   [align=left] {$\displaystyle S$};
\draw (276,66) node [anchor=north west][inner sep=0.75pt]   [align=left] {$\displaystyle S$};
\draw (277,195) node [anchor=north west][inner sep=0.75pt]   [align=left] {$\displaystyle S$};
\draw (306,128) node [anchor=north west][inner sep=0.75pt]   [align=left] {$\displaystyle -$};
\end{tikzpicture}
\]
Similarly, this expression can be rewritten as follows:

\[
\tikzset{every picture/.style={line width=0.75pt}} 
\begin{tikzpicture}[x=0.75pt,y=0.75pt,yscale=-1,xscale=1]
\draw    (314,172) -- (314,215) ;
\draw [shift={(314,215)}, rotate = 90] [color={rgb, 255:red, 0; green, 0; blue, 0 }  ][fill={rgb, 255:red, 0; green, 0; blue, 0 }  ][line width=0.75]      (0, 0) circle [x radius= 3.35, y radius= 3.35]   ;
\draw    (395,71) -- (395,154) ;
\draw    (294,144) .. controls (295.76,154.04) and (300.76,170.04) .. (314,172) ;
\draw [shift={(314,172)}, rotate = 8.43] [color={rgb, 255:red, 0; green, 0; blue, 0 }  ][fill={rgb, 255:red, 0; green, 0; blue, 0 }  ][line width=0.75]      (0, 0) circle [x radius= 3.35, y radius= 3.35]   ;
\draw    (314,172) .. controls (325.76,168.04) and (330.76,160.04) .. (334,144) ;
\draw   (394.44,173) -- (377.89,154.08) -- (411,154.08) -- cycle ;
\draw    (293.82,107.95) -- (294.07,145.91) ;
\draw   (284.07,86.91) -- (304.07,86.91) -- (304.07,106.91) -- (284.07,106.91) -- cycle ;
\draw    (293.92,47.12) -- (293.82,86.95) ;
\draw    (319.92,122.79) -- (340.41,123.21) ;
\draw    (319.92,122.79) -- (319.92,143.19) ;
\draw    (319.92,143.19) -- (349.92,143.59) ;
\draw    (340.41,123.21) -- (349.92,143.59) ;
\draw   (293.44,27.84) -- (276.89,46.77) -- (310,46.77) -- cycle ;
\draw    (333.07,94) -- (332.82,122.04) ;
\draw   (331.44,75.84) -- (314.89,94.77) -- (348,94.77) -- cycle ;
\draw   (231,179.02) .. controls (231,164.67) and (242.63,153.04) .. (256.98,153.04) .. controls (271.33,153.04) and (282.96,164.67) .. (282.96,179.02) .. controls (282.96,193.37) and (271.33,205) .. (256.98,205) .. controls (242.63,205) and (231,193.37) .. (231,179.02) -- cycle ;
\draw   (330,176) .. controls (330,162.19) and (341.19,151) .. (355,151) .. controls (368.81,151) and (380,162.19) .. (380,176) .. controls (380,189.81) and (368.81,201) .. (355,201) .. controls (341.19,201) and (330,189.81) .. (330,176) -- cycle ;
\draw (293.67,99) node   [align=left] {$\displaystyle D$};
\draw (333.84,135.09) node   [align=left] {$\displaystyle f$};
\draw (293.37,40.08) node   [align=left] {$\displaystyle s$};
\draw (331.45,88.13) node   [align=left] {$\displaystyle \omega ^{2}$};
\draw (396.07,160.08) node   [align=left] {$\displaystyle \omega $};
\draw (256.91,175.09) node   [align=left] {$\displaystyle \frac{1}{\sqrt{|S} |}$};
\draw (356.44,175.09) node   [align=left] {$\displaystyle \frac{1}{\sqrt{3}}$};
\end{tikzpicture}
\]

Disregarding both the second subsystem and the scalar coefficients, we obtain:
\[
\tikzset{every picture/.style={line width=0.75pt}} 
\begin{tikzpicture}[x=0.65pt,y=0.65pt,yscale=-1,xscale=1]
\draw    (125.07,113) -- (125.07,156) ;
\draw [shift={(125.07,156)}, rotate = 90] [color={rgb, 255:red, 0; green, 0; blue, 0 }  ][fill={rgb, 255:red, 0; green, 0; blue, 0 }  ][line width=0.75]      (0, 0) circle [x radius= 3.35, y radius= 3.35]   ;
\draw    (105.07,85) .. controls (106.82,95.04) and (111.82,111.04) .. (125.07,113) ;
\draw [shift={(125.07,113)}, rotate = 8.43] [color={rgb, 255:red, 0; green, 0; blue, 0 }  ][fill={rgb, 255:red, 0; green, 0; blue, 0 }  ][line width=0.75]      (0, 0) circle [x radius= 3.35, y radius= 3.35]   ;
\draw    (125.07,113) .. controls (136.82,109.04) and (141.82,101.04) .. (145.07,85) ;
\draw   (97.07,65) -- (117.07,65) -- (117.07,85) -- (97.07,85) -- cycle ;
\draw    (106.92,36.15) -- (106.82,65.04) ;
\draw   (103.44,15.84) -- (86.89,34.77) -- (120,34.77) -- cycle ;
\draw    (131.92,64.79) -- (152.41,65.21) ;
\draw    (131.92,64.79) -- (131.92,85.19) ;
\draw    (131.92,85.19) -- (161.92,85.59) ;
\draw    (152.41,65.21) -- (161.92,85.59) ;
\draw    (145.07,36) -- (144.82,64.04) ;
\draw   (143.44,17.84) -- (126.89,36.77) -- (160,36.77) -- cycle ;
\draw    (261.07,119) -- (261.07,162) ;
\draw [shift={(261.07,162)}, rotate = 90] [color={rgb, 255:red, 0; green, 0; blue, 0 }  ][fill={rgb, 255:red, 0; green, 0; blue, 0 }  ][line width=0.75]      (0, 0) circle [x radius= 3.35, y radius= 3.35]   ;
\draw    (241.07,91) .. controls (242.82,101.04) and (247.82,117.04) .. (261.07,119) ;
\draw [shift={(261.07,119)}, rotate = 8.43] [color={rgb, 255:red, 0; green, 0; blue, 0 }  ][fill={rgb, 255:red, 0; green, 0; blue, 0 }  ][line width=0.75]      (0, 0) circle [x radius= 3.35, y radius= 3.35]   ;
\draw    (261.07,119) .. controls (272.82,115.04) and (277.82,107.04) .. (281.07,91) ;
\draw   (239.44,21.84) -- (222.89,40.77) -- (256,40.77) -- cycle ;
\draw    (267.92,70.79) -- (288.41,71.21) ;
\draw    (267.92,70.79) -- (267.92,91.19) ;
\draw    (267.92,91.19) -- (297.92,91.59) ;
\draw    (288.41,71.21) -- (297.92,91.59) ;
\draw    (281.07,42) -- (280.82,70.04) ;
\draw   (279.44,23.84) -- (262.89,42.77) -- (296,42.77) -- cycle ;
\draw    (240.92,41.06) -- (241.07,91) ;
\draw    (402.07,117) -- (402.07,160) ;
\draw [shift={(402.07,160)}, rotate = 90] [color={rgb, 255:red, 0; green, 0; blue, 0 }  ][fill={rgb, 255:red, 0; green, 0; blue, 0 }  ][line width=0.75]      (0, 0) circle [x radius= 3.35, y radius= 3.35]   ;
\draw    (382.07,89) .. controls (383.82,99.04) and (388.82,115.04) .. (402.07,117) ;
\draw [shift={(402.07,117)}, rotate = 8.43] [color={rgb, 255:red, 0; green, 0; blue, 0 }  ][fill={rgb, 255:red, 0; green, 0; blue, 0 }  ][line width=0.75]      (0, 0) circle [x radius= 3.35, y radius= 3.35]   ;
\draw [shift={(382.07,89)}, rotate = 80.07] [color={rgb, 255:red, 0; green, 0; blue, 0 }  ][fill={rgb, 255:red, 0; green, 0; blue, 0 }  ][line width=0.75]      (0, 0) circle [x radius= 3.35, y radius= 3.35]   ;
\draw    (402.07,117) .. controls (413.82,113.04) and (418.82,105.04) .. (422.07,89) ;
\draw    (383.92,40.15) -- (383.82,69.04) ;
\draw [shift={(383.82,69.04)}, rotate = 90.19] [color={rgb, 255:red, 0; green, 0; blue, 0 }  ][fill={rgb, 255:red, 0; green, 0; blue, 0 }  ][line width=0.75]      (0, 0) circle [x radius= 3.35, y radius= 3.35]   ;
\draw   (380.44,19.84) -- (363.89,38.77) -- (397,38.77) -- cycle ;
\draw    (408.92,68.79) -- (429.41,69.21) ;
\draw    (408.92,68.79) -- (408.92,89.19) ;
\draw    (408.92,89.19) -- (438.92,89.59) ;
\draw    (429.41,69.21) -- (438.92,89.59) ;
\draw    (422.07,40) -- (421.82,68.04) ;
\draw   (420.44,21.84) -- (403.89,40.77) -- (437,40.77) -- cycle ;
\draw    (226.92,215.79) -- (247.41,216.21) ;
\draw    (226.92,215.79) -- (226.92,236.19) ;
\draw    (226.92,236.19) -- (256.92,236.59) ;
\draw    (247.41,216.21) -- (256.92,236.59) ;
\draw    (240.07,187) -- (239.82,215.04) ;
\draw   (238.44,168.84) -- (221.89,187.77) -- (255,187.77) -- cycle ;
\draw    (242.07,238.09) -- (241.82,266.13) ;
\draw   (241.44,283) -- (224.89,264.08) -- (258,264.08) -- cycle ;
\draw    (336.92,220.95) -- (357.41,221.36) ;
\draw    (336.92,220.95) -- (336.92,241.35) ;
\draw    (336.92,241.35) -- (366.92,241.75) ;
\draw    (357.41,221.36) -- (366.92,241.75) ;
\draw    (350.07,192.16) -- (349.82,220.19) ;
\draw   (348.44,174) -- (331.89,192.92) -- (365,192.92) -- cycle ;
\draw    (352.07,243.24) -- (351.82,271.28) ;
\draw [shift={(351.82,271.28)}, rotate = 90.5] [color={rgb, 255:red, 0; green, 0; blue, 0 }  ][fill={rgb, 255:red, 0; green, 0; blue, 0 }  ][line width=0.75]      (0, 0) circle [x radius= 3.35, y radius= 3.35]   ;
\draw   (334,84.43) .. controls (334,72.59) and (343.59,63) .. (355.43,63) .. controls (367.26,63) and (376.85,72.59) .. (376.85,84.43) .. controls (376.85,96.26) and (367.26,105.85) .. (355.43,105.85) .. controls (343.59,105.85) and (334,96.26) .. (334,84.43) -- cycle ;
\draw   (289,227.43) .. controls (289,215.59) and (298.59,206) .. (310.43,206) .. controls (322.26,206) and (331.85,215.59) .. (331.85,227.43) .. controls (331.85,239.26) and (322.26,248.85) .. (310.43,248.85) .. controls (298.59,248.85) and (289,239.26) .. (289,227.43) -- cycle ;
\draw (106.67,77.08) node   [align=left] {$\displaystyle D$};
\draw (103.37,28.08) node   [align=left] {$\displaystyle s$};
\draw (142.84,75.09) node   [align=left] {$\displaystyle f$};
\draw (239.37,34.08) node   [align=left] {$\displaystyle s$};
\draw (281.84,81.09) node   [align=left] {$\displaystyle f$};
\draw (316,75) node [anchor=north west][inner sep=0.75pt]   [align=left] {$\displaystyle -$};
\draw (380.37,32.08) node   [align=left] {$\displaystyle s$};
\draw (421.84,78.09) node   [align=left] {$\displaystyle f$};
\draw (180,78) node [anchor=north west][inner sep=0.75pt]   [align=left] {$\displaystyle =$};
\draw (180,209) node [anchor=north west][inner sep=0.75pt]   [align=left] {$\displaystyle =$};
\draw (239.84,226.09) node   [align=left] {$\displaystyle f$};
\draw (241.37,271.93) node   [align=left] {$\displaystyle s$};
\draw (350.84,230.24) node   [align=left] {$\displaystyle f$};
\draw (268,218) node [anchor=north west][inner sep=0.75pt]   [align=left] {$\displaystyle -$};
\draw (420.45,33.13) node   [align=left] {$\displaystyle \omega ^{2}$};
\draw (145.45,29.13) node   [align=left] {$\displaystyle \omega ^{2}$};
\draw (279.44,33.31) node   [align=left] {$\displaystyle \omega ^{2}$};
\draw (236.45,180.13) node   [align=left] {$\displaystyle \omega ^{2}$};
\draw (348.44,183.46) node   [align=left] {$\displaystyle \omega ^{2}$};
\draw (356.43,79.43) node   [align=left] {$\displaystyle \frac{2}{|S|}$};
\draw (311.43,222.43) node   [align=left] {$\displaystyle \frac{2}{|S|}$};
\end{tikzpicture}
\]

For a balanced element $s$, we have:

\[
\tikzset{every picture/.style={line width=0.75pt}} 
\begin{tikzpicture}[x=0.75pt,y=0.75pt,yscale=-1,xscale=1]
\draw    (224.92,152.95) -- (245.41,153.36) ;
\draw    (224.92,152.95) -- (224.92,173.35) ;
\draw    (224.92,173.35) -- (254.92,173.75) ;
\draw    (245.41,153.36) -- (254.92,173.75) ;
\draw    (238.07,124.16) -- (237.82,152.19) ;
\draw   (236.44,106) -- (219.89,124.92) -- (253,124.92) -- cycle ;
\draw    (240.07,175.24) -- (239.82,203.28) ;
\draw   (239.44,220.16) -- (222.89,201.23) -- (256,201.23) -- cycle ;
\draw    (334.92,158.11) -- (355.41,158.52) ;
\draw    (334.92,158.11) -- (334.92,178.51) ;
\draw    (334.92,178.51) -- (364.92,178.91) ;
\draw    (355.41,158.52) -- (364.92,178.91) ;
\draw    (348.07,129.31) -- (347.82,157.35) ;
\draw   (346.44,111.16) -- (329.89,130.08) -- (363,130.08) -- cycle ;
\draw    (350.07,180.4) -- (349.82,208.44) ;
\draw [shift={(349.82,208.44)}, rotate = 90.5] [color={rgb, 255:red, 0; green, 0; blue, 0 }  ][fill={rgb, 255:red, 0; green, 0; blue, 0 }  ][line width=0.75]      (0, 0) circle [x radius= 3.35, y radius= 3.35]   ;
\draw   (291,166.43) .. controls (291,154.59) and (300.59,145) .. (312.43,145) .. controls (324.26,145) and (333.85,154.59) .. (333.85,166.43) .. controls (333.85,178.26) and (324.26,187.85) .. (312.43,187.85) .. controls (300.59,187.85) and (291,178.26) .. (291,166.43) -- cycle ;
\draw (237.84,162.24) node   [align=left] {$\displaystyle f$};
\draw (239.37,209.08) node   [align=left] {$\displaystyle s$};
\draw (348.84,167.4) node   [align=left] {$\displaystyle f$};
\draw (271,156) node [anchor=north west][inner sep=0.75pt]   [align=left] {$\displaystyle =$};
\draw (235.45,118.13) node   [align=left] {$\displaystyle \omega ^{2}$};
\draw (344.45,123.13) node   [align=left] {$\displaystyle \omega ^{2}$};
\draw (313.43,161.43) node   [align=left] {$\displaystyle \frac{2}{|S|}$};
\end{tikzpicture}
\]
This implies that following the measurement, such a balanced element $s$ cannot be observed.
For the case where $G = \mathbb{Z}_3$, suppose that the image $f(S)$ contains $a$ 0's, $b$ 1's,  $c$ 2's and under the condition that $\omega^2(f(s))=1$, we obtain
\[
\frac{2(a+b\omega^2+c\omega)}{a+b+c} = 1.
\]
Consequently, it is straightforward to see that if $a : b : c = 4 : 1 : 1$, the number of balanced elements is given by $a$.

In what follows, we extend this construction to a more general formulation.
\begin{theorem}
If n is a prime number, then
generalized single-shot Grover algorithm is guaranteed to return an unbalanced element of S
if and only if $a_0=(n+1)a_1=\cdots=(n+1)a_{n-1}$ where $a_i $ is the number of $s$ satisfying $f(s)=i$.
\begin{proof}

For a general group $G = \mathbb{Z}_n$, following a simplification similar to that of the $\mathbb{Z}_3$ case, we replace the effect $\omega^2$ with an irreducible representation $\rho$ satisfying $\rho(i) = q^i$, where $q$ is a primitive $n$-th root of unity. Without loss of generality, assume that $f(S)$ contains $a_0$ 0's, $a_1$ 1's, $\ldots$ , $a_{n-1}$ $n-1$'s and $\rho(f(s))=1$, according
to the balanced formula, we have
\[
\frac{2\sum_{i=0}^{n-1} a_i q^i}{\sum_{i=0}^{n-1} a_i} = 1\\
\Leftrightarrow \sum_{i=0}^{n-1} a_i(2q^i-1)=0.
\]
Note that $\sum_{i=0}^{n-1} a_i(2x^i-1)$ is a  polynomial with $q$ as a root, it must be a product of cyclotomic polynomials.
When $n$ is a prime number, the only cyclotomic polynomial $\Phi_{n}(x) = 1 + x + x^{2} + \cdots + x^{n-1} $. So we can conclude that $a_0=(n+1)a_1=\cdots=(n+1)a_{n-1}$.
Conversely, if $a_0=(n+1)a_1=\cdots=(n+1)a_{n-1}$, this generalized single-shot Grover algorithm is guaranteed to return an unbalanced element of $S$.
\end{proof}
\end{theorem}

 Therefore, if we mark elements according to the proportion described above, this generalized single-shot Grover algorithm is guaranteed to return an unbalanced element of $S$.
However, when $n$ is not prime, $a_0=(n+1)a_1=\cdots=(n+1)a_{n-1}$ is not a suffcient condition.

\section{Controlled-NOT gates and quantum entanglement}
In the preceding three sections on quantum algorithms, our discussion of quantum advantage has consistently highlighted the crucial role of quantum entanglement. In this section, we further explore the entanglement properties of qubits from the perspective of tensor diagram semantics.

\subsection{Controlled-NOT gates}
Coecke and Pavlovic introduced Frobenius structures into quantum mechanics in 2007, involving three stages: preparation, evolution, and measurement\cite{CP07}. In 2011, Duncan and Coecke further developed the graphical language, giving representations of quantum gates such as the CNOT gate from the ZX calculus perspective\cite{CD11}.
\begin{center}
\includegraphics[width=9cm]{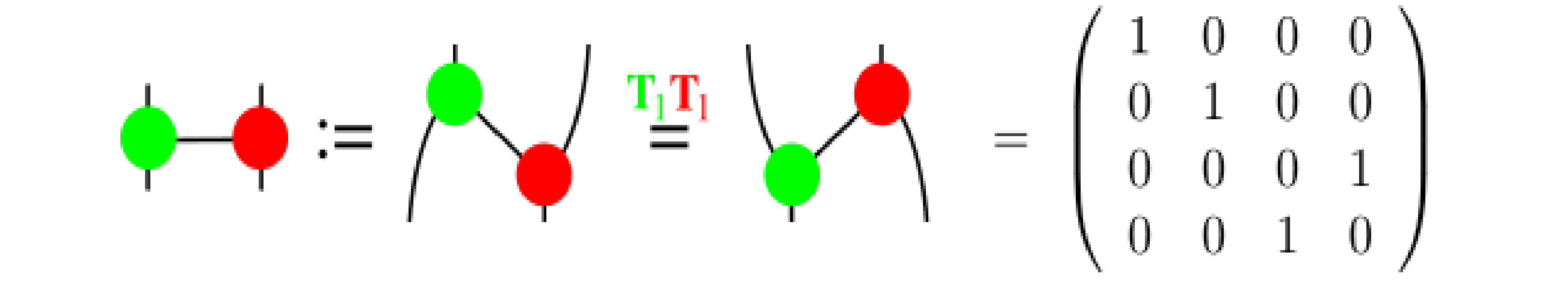}
\end{center}

The tensor diagram semantics of the Bernstein-Vazirani oracle is given as follows:
\begin{center}

\begin{tikzpicture}[thick,scale=0.6]

    \draw (0,0) rectangle (2,2);
    \node at (1,1) {\large $U_f$};
    \draw (-0.8, 1.7) -- (0,1.7) node[pos=0, left] {\small $x$};
    \draw (2, 1.7) -- (2.8,1.7) node[pos=1, right] {\small $x$};
    \draw (-0.8, 0.3) -- (0,0.3) node[pos=0, left] {\small $y$};
    \draw (2, 0.3) -- (2.8,0.3) node[pos=1, right] {\small $y \oplus f(x)$};

    \draw[<->] (5.5,1) -- (6.5,1);

    \begin{scope}[shift={(8,-1.25)}]
        \draw (0,3.5) -- (0,2) to [out=-90,in=180] (0.8,1.2);
        \draw (0.8,1.2) -- (0.8,0.5); 
        \filldraw[black] (0.8,1.2) circle (2.5pt);
        \draw (0.8,1.2) to [out=0,in=-90] (1.6,2) ;
        \draw (1.2,2.7) -- (1.2,2) -- (2.1,2) -- (1.9,2.7) -- cycle;
        \node at (1.6,2.3) {$f$};
        \draw (1.6,2.7) to [out=90,in=180] (2.3,3.5);
        \draw (2.4,3.6) -- (2.4,4);
        \node[draw, circle, fill=white, inner sep=1.5pt] at (2.4,3.5) {};
        \draw (2.5,3.5) to [out=0,in=90] (3.2,2.7) -- (3.2,0.8);
    \end{scope}

\end{tikzpicture}
\end{center}
From the perspective of tensor diagram semantics, we present the following theorem. Although this result has been proven previously, we provide an alternative proof that avoids purely matrix-based calculations.
\begin{theorem}\label{5.1}
In $\mathbf{FHilb}$ and $A=(\mathbb{C}^2)^{\otimes n}$, a pair of complementary symmetric dagger Frobenius structures satisfies:
\begin{center}
\begin{tikzpicture}[thick, scale=0.9]

    \begin{scope}[shift={(0,0)}]
        \draw (0,3) -- (0,2) to [out=-90,in=180] (0.8,1.2);
        \draw (0.7,1.2) -- (0.7,0.5); 
        \filldraw[black] (0.7,1.2) circle (2.5pt);

        \draw (0.7,1.2) to [out=0,in=-90] (1.3,1.8) to [out=90,in=180] (2,2.4);
        \draw (2,2.4) -- (2,3);
        \node[draw, circle, fill=white, inner sep=2pt] at (2,2.4) {};
        \draw (2.1,2.4) to [out=0,in=90] (2.7,1.5) -- (2.7,0.5);
    \end{scope}

    \node at (3.8, 1.75) {\large $=$};

    \begin{scope}[shift={(5,0)}]
        \draw (0,0.5) -- (0,1.5) to [out=90,in=180] (0.7,2.4);
        \draw (0.7,2.4) -- (0.7,3); 
        \filldraw[black] (0.7,2.4) circle (2.5pt); 
        \draw (0.8,2.4) to [out=0,in=90] (1.3,1.8) to [out=-90,in=180] (2,1.2);
        \draw (2,1.2) -- (2,0.6);
        \node[draw, circle, fill=white, inner sep=2pt] at (2,1.2) {};
        \draw (2.1,1.2) to [out=0,in=-90] (2.7,2)-- (2.7,3);
    \end{scope}

\end{tikzpicture}
\end{center}
\end{theorem}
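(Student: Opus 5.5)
The plan is to prove the equation purely by topological manipulation. I would use the compact structure that each Frobenius structure induces, the Frobenius law, and the snake equations. The left-hand side is $(\mathrm{id}\otimes\mu_W)\circ(\delta_B\otimes\mathrm{id})$ and the right-hand side is $(\mu_B\otimes\mathrm{id})\circ(\mathrm{id}\otimes\delta_W)$. The idea is that the right-hand side is a $90^\circ$ rotation of the left-hand side: the black copy node "slides" up to become a black merge, and the white merge slides down to become a white copy.

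First, I would record the cups and caps. Black gives $\cup_B=\delta_B\circ\eta_B$ and $\cap_B=\epsilon_B\circ\mu_B$; white gives $\cup_W,\cap_W$ in the same way. By the Frobenius law and unitality, each pair satisfies the snake equations. Each pair also lets us rewrite a multiplication as a comultiplication with one leg bent: $\mu=(\mathrm{id}\otimes\cap)\circ(\delta\otimes\mathrm{id})$, and dually $\delta=(\mu\otimes\mathrm{id})\circ(\mathrm{id}\otimes\cup)$.

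The key lemma is that in $\mathbf{FHilb}$ the black and white structures induce the same self-duality, i.e. $\cup_B=\cup_W$ and $\cap_B=\cap_W$, possibly up to a scalar. I would argue this from the fact that both are symmetric dagger Frobenius structures coming from bases with real coefficients: the computational basis and the $\sqrt2\mathrm{X}$ basis. For any such structure the cup is $\sum_i \lambda_i|e_i\rangle|e_i\rangle$. For a real orthogonal basis this is proportional to the canonical $\sum_x|x\rangle|x\rangle$, because orthogonal real matrices preserve $\sum_x |x\rangle|x\rangle$. Alternatively, one can use complementarity: the antipode is the identity on $(\mathbb{C}^2)^{\otimes n}$ with these real structures.

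This step is the main obstacle, because the $\sqrt2$ normalisation of the white structure introduces scalars. I would fix them by checking that the scalar picked up when bending a black leg is exactly cancelled by the inverse scalar from the white leg. The goal is that each side contains exactly one black and one white node, so the factors agree.

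With a common cup/cap in hand, I would rewrite the left-hand side in three moves. First, express $\mu_W$ as $\delta_W$ with its left input bent down by a cap. Second, push the cap through $\delta_B$ using the dual rewriting, which turns $\delta_B$ with a bent output into $\mu_B$ with a bent input. Third, straighten the remaining zig-zag with the snake equation. Commutativity of both structures (symmetry) is needed to swap legs so that the wires land in the positions shown on the right-hand side. The result is the right-hand side, i.e. the familiar fact that the CNOT can be drawn with the control and target spiders connected in either direction.
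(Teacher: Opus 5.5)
Your proposal follows essentially the paper's route. In the paper's proof the first and third equalities are (non-commutative) spider-theorem rewrites that bend a leg of one spider. The middle equality uses the explicit single-qubit computation for the computational and $\sqrt{2}\mathrm{X}$ structures, which is what lets a black cup/cap be exchanged for a white one; that is exactly your key lemma. Your three moves reduce to two planar ones, with no appeal to commutativity:
\[
(\mathrm{id}\otimes\mu_W)(\delta_B\otimes\mathrm{id})=(\mathrm{id}\otimes\cap_W\otimes\mathrm{id})(\delta_B\otimes\delta_W)=(\mu_B\otimes\mathrm{id})(\mathrm{id}\otimes\delta_W).
\]
The first step uses $\mu_W=(\cap_W\otimes\mathrm{id})(\mathrm{id}\otimes\delta_W)$. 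The second uses $\cap_W=\cap_B$ together with $\mu_B=(\mathrm{id}\otimes\cap_B)(\delta_B\otimes\mathrm{id})$.

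The one point to repair is your treatment of scalars, because your cancellation argument is wrong as stated.
\begin{itemize}
\item \emph{No scalar arises.} The cup of the structure attached to an orthogonal basis $\{a_i\}$ is $\sum_i a_i\otimes a_i/\|a_i\|^2$. For a real basis this equals $\sum_x|x\rangle|x\rangle$ exactly, whatever the norms, so the $\sqrt{2}$ is absorbed. Concretely, on one qubit $\eta_W=|0\rangle$ and $\delta_W\eta_W=|00\rangle+|11\rangle=\delta_B\eta_B$; on $n$ qubits take tensor powers.
\item \emph{Cancellation would not have worked.} Only one cap is exchanged in the derivation. If $\cup_W=\lambda\cup_B$, then $\cap_W=\bar\lambda\cap_B$, and the left-hand side would equal $\bar\lambda$ times the right-hand side rather than agreeing with it.
\item \emph{Exactness is essential, not merely convenient.} Take the computational basis paired with the complementary basis $\{|0\rangle\pm i|1\rangle\}$. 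Then the white cup is $|00\rangle-|11\rangle$, and the right-hand side equals $(Z\otimes\mathrm{id})$ composed with the left-hand side. So the statement fails for an arbitrary complementary pair. Your restriction to the real computational/$\sqrt{2}\mathrm{X}$ structures is necessary, and the paper's proof likewise covers only that case.
\end{itemize}
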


\subsection{Quantum entanglement}
Quantum entanglement is a fundamental feature of quantum mechanics and is closely tied to the application of controlled-NOT (CNOT) gates. Canonical examples of entangled systems include the Bell states and the Greenberger-Horne-Zeilinger (GHZ) state. Their definitions are given as follows:
$$
\vert \mathrm{Bell} \rangle = \frac{\vert 00 \rangle + \vert 11 \rangle}{\sqrt{2}},\quad \qquad \vert \mathrm{GHZ} \rangle = \frac{\vert 000 \rangle + \vert 111 \rangle}{\sqrt{2}}.
$$
The quantum circuits for preparing these states are given as follows:
\begin{center}

\end{center}

The tensor diagram semantics of the Bell state and GHZ state are represented as follows\cite{CK17}:
\begin{center}
%
\end{center}

We now consider the specific case of a single qubit. In the category $\mathbf{FHilb}$, let $A = \mathbb{C}^2$, equipped with the Frobenius structures $(A, \mu_{B}, \eta_{B})$ and $(A, \mu_{W}, \eta_{W})$ associated with the computational basis and the $\sqrt{2}\mathrm{X}$ basis, respectively. A straightforward computation then yields:
\begin{center}
\begin{equation}\label{qubit}
%
\end{equation}
\end{center}
Using this computation, we now prove Theorem \ref{5.1}.
\begin{proof}[Proof of Theorem \ref{5.1}]
\[
%
\]
\noindent where the first and third equalities both use the non-commutative spider theorem \ref{thm2.2}, and the second equality uses \ref{qubit} .
\end{proof}

Next, we consider another fundamental three-qubit entangled state: the W state. The quantum circuit for preparing this state is given as follows\cite{Cruz19}:

\[
\tikzset{every picture/.style={line width=0.75pt}} 
%
\]
where $\theta = 2\arccos(\frac{1}{\sqrt{3}})$.

The corresponding ZX-calculus representation is given as follows. Throughout the subsequent simplifications, we omit global phases and scalar factors.

\[
\tikzset{every picture/.style={line width=0.75pt}} 
%
\]
In the aforementioned diagrammatic simplification, we employ basic ZX-calculus rules, phase gadgets, local complementation and pivot rule \cite{Wet20}, where the dashed lines represent Hadamard edges. Ultimately, the resulting diagram yields an expression distinct from the standard formulation of W state introduced by Coecke  \cite{CE10}.

\section{Conclusion}

This paper investigates foundational quantum protocols through categorical tensor-graph semantics in \textbf{FHilb}, demonstrating that algorithmic functionalities can be explicitly distilled into topological skeletons rather than obscured by high-dimensional matrices. By topologically reinterpreting the Bernstein-Vazirani and Simon algorithms, we elucidated the intuitive diagrammatic mechanisms of phase kickback and information retrieval. Extending beyond the standard binary qubit paradigm, we applied our diagrammatic toolkit to multi-valued quantum systems, successfully formalizing the qutrit-adapted generalized Deutsch--Jozsa. Notably, by analyzing the topological representation of the generalized Grover algorithm, we derived rigorous algebraic conditions for its deterministic success, seamlessly linking diagrammatic algorithmic design with group representation theory. Furthermore, we explored the categorical genesis of quantum entanglement by analyzing controlled-NOT gates using complementary Frobenius structures, and we investigated a diagrammatic decomposition of the W-state preparation protocol.

Specifically, our W-state reduction retained irreducible non-Clifford phase gates, diverging from standard simplified forms. This phenomenon not only highlights the more intricate topological entanglement properties inherent in the W-state compared to GHZ or Bell states, but also indicates current limitations in pure graphical axiomatic systems when handling highly complex non-Clifford resources. Consequently, resolving these residual non-Clifford elements presents a clear trajectory for future research: enriching tensor category axioms with higher-order rewrite rules to enable automated, scalable circuit optimization in modern quantum compilers.

\section*{\textbf{Conflict of interest}}
	
	The authors declare that they have no conflict of interest.

\section*{\textbf{Acknowledgments}}
We are grateful to Dr. Fen Zuo (Shanghai Micro-Era Digital Technology Co., Ltd.) for his constructive comments and valuable suggestions on the first version of the manuscript. This work is supported by Fundamental and Interdisciplinary Disciplines Breakthrough Plan of the Ministry of Education of China (JYB2025XDXM112),
		supported by the NNSF of China (Grant No. 12171155), and in part by the Science and Technology Commission of Shanghai Municipality (Grant No. 22DZ2229014).
\bibliographystyle{plain}

\onecolumn\newpage
\appendix
\section{Categorical tensor-graphs}

Below, we review the foundational concepts of categorical tensor-graphs; a more detailed exposition can be found in  \cite{HV19}.
\begin{definition}
In the monoidal categories $\mathbf{Hilb}$ and $\mathbf{FHilb}$:

$\bullet$ The tensor product $\otimes: \mathbf{Hilb} \times \mathbf{Hilb}\rightarrow\mathbf{Hilb}$ is the tensor product of Hilbert spaces;

$\bullet$ The unit object $I$ is the one-dimensional Hilbert space $\mathbb{C}$;

$\bullet$ The associator $(H \otimes J) \otimes K \stackrel{\alpha_{H, J, K}}{\longrightarrow} H \otimes(J \otimes K)$ is the unique linear map satisfying $(a \otimes b) \otimes c \mapsto a \otimes(b \otimes c)$ for all $\forall a \in H,  b \in J$ and $c \in K$;

$\bullet$ The left unitor $\mathbb{C} \otimes H \stackrel{\lambda_{H}}{\longrightarrow} H$ is the unique linear map satisfying $1 \otimes a \mapsto a$ for all $\forall a \in H$;

$\bullet$ The right unitor $H \otimes \mathbb{C} \stackrel{\rho_{H}}{\longrightarrow} H$ is the unique linear map satisfying $a \otimes 1 \mapsto a$ for all $\forall a \in H$.\\
$\mathbf{FHilb}$  denotes the restriction of $\mathbf{Hilb}$ to finite-dimensional spaces.
\end{definition}

For morphisms $A \stackrel{f}{\rightarrow} B$ and $C \stackrel{g}{\rightarrow} D$, we draw their tensor product $A \otimes C \stackrel{f \otimes g}{\longrightarrow} B \otimes D$ as:
\begin{center}
\includegraphics[width=13cm]{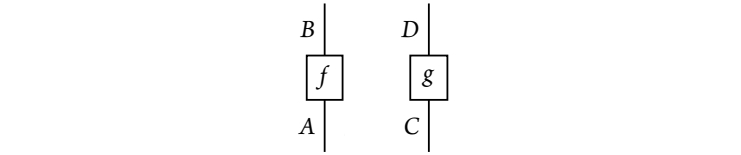}
\end{center}
where inputs are drawn at the bottom and outputs at the top; in this sense, ``time" flows from bottom to top.

Below we introduce two special kinds of morphisms: states and effects, which we will frequently use in the following to simulate the preparation and measurement processes of quantum algorithms.
\begin{definition}[state]
In a monoidal category, a state of object $A$ is a morphism is a morphismx $I\rightarrow A$.
\end{definition}
\begin{definition}[effect]
In a monoidal category, an effect or co-state of object $A$ is a morphism $A \rightarrow I$.
\end{definition}

Their graphical calculus representations are as follows
\begin{center}
\begin{tikzpicture}[thick, scale=1.2]

    \begin{scope}[shift={(0,0)}]
        \draw (0, 1) -- (0, 0.4);

        \draw (0, -0.1) -- (-0.4, 0.4) -- (0.4, 0.4) -- cycle;

        \node at (0, 0.22) {\small $a$};
    \end{scope}

    \begin{scope}[shift={(2.5,0)}]
        \draw (0, 0.9) -- (-0.4, 0.4) -- (0.4, 0.4) -- cycle;

        \node at (0, 0.58) {\small $a$};

        \draw (0, 0.4) -- (0, -0.2);
    \end{scope}

\end{tikzpicture}
\end{center}
In graphical calculus, we draw the unit object $I$ as an empty diagram, so the domain and codomain of the above two morphisms are empty diagrams. To emphasize this, we use triangles instead of the original squares.

$\mathrm{Dagger}$ is also an important concept in category theory. After introducing this property, the drawing of morphisms will be further modified.
\begin{definition}[$\mathrm{dagger}$]
A $\mathrm{dagger}$ on a category $\mathbf{C}$ is an involutive contravariant functor $\dagger: \mathbf{C} \rightarrow \mathbf{C}$ that is identity on objects. A $\mathrm{dagger}$ category is a category equipped with a $\mathrm{dagger}$ .
\end{definition}

In graphical calculus, we represent taking the dagger  by reflecting the diagram across a horizontal axis. To help distinguish these morphisms, we will draw the original boxes in an asymmetric way.
\begin{center}
\includegraphics[width=12cm]{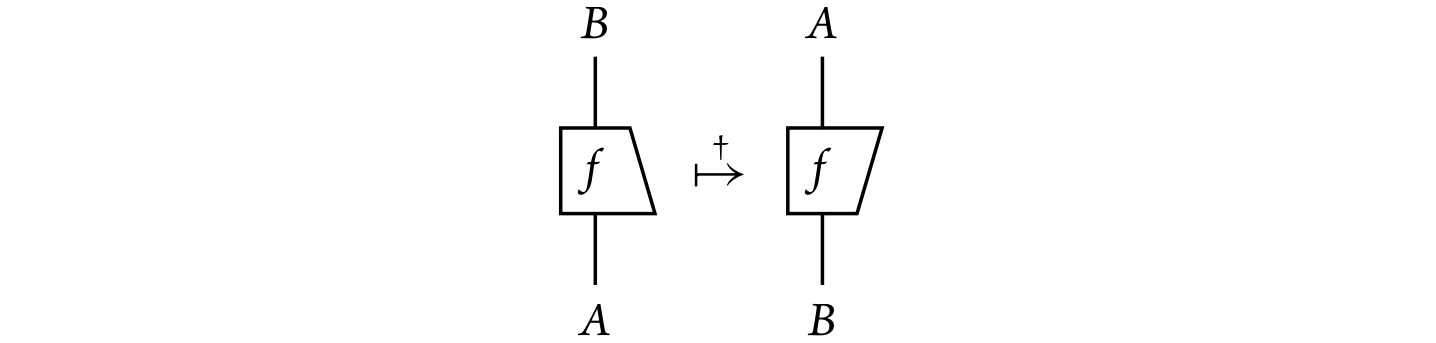}
\end{center}

The concept of copyable states introduced below has a crucial connection with the entanglement.
\begin{definition}[Copyable state]
In a braided monoidal category, given an object $A$ and a morphism $A \xrightarrow{d} A \otimes A$, a state $I \xrightarrow{a} A$ is called copyable when the following condition holds:
\begin{center}
\includegraphics[width=11cm]{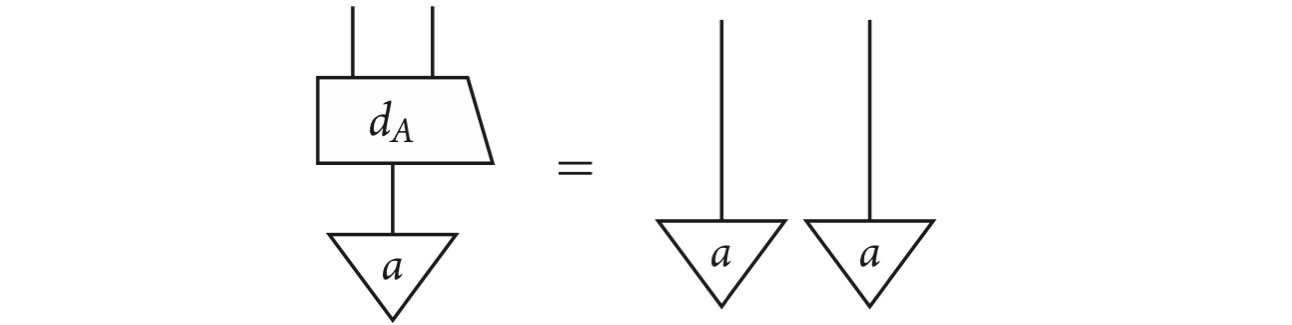}
\end{center}
\end{definition}

Frobenius structures are the foundation for simulating quantum computation, and the comonoid and monoid structures are precisely the building blocks of Frobenius structures. Moreover, due to the important roles of the four morphisms multiplication, unit, comultiplication, and counit, we use four new tensor diagrams to represent them:
\begin{center}
\begin{tikzpicture}[thick, scale=1]
        \node at (-1.5,0) {$\Delta_W :$};
    \begin{scope}[shift={(0,0)}]
        \draw (-0.7,0.6) to [out=-90,in=180] (0.1,0);
        \draw (0.1,0) to [out=0,in=-90] (0.7,0.6);
        \node[draw, circle, fill=white, inner sep=1.5pt] at (0,0) {};
        \draw (0,-0.1) -- (0,-0.5); 
    \end{scope}

        \node at (2,0) {$\varepsilon_W :$};

    \begin{scope}[shift={(3,0)}]
        \node[draw, circle, fill=white, inner sep=1.5pt] at (0,0) {};
        \draw (0,-0.1) -- (0,-0.5); 
    \end{scope}

        \node at (4.5,0) {$\mu_B :$};

    \begin{scope}[shift={(6,0)}]
        \draw (-0.7,-0.6) to [out=90,in=180] (0,0);
        \draw (0,0) to [out=0,in=90] (0.7,-0.6);
        \filldraw[black] (0,0) circle (2pt);
        \draw (0,0) -- (0,0.5); 
    \end{scope}

        \node at (8,0) {$\eta_B :$};

    \begin{scope}[shift={(9,0)}]
        \filldraw[black] (0,0) circle (2pt);
        \draw (0,0) -- (0,0.5); 
    \end{scope}
\end{tikzpicture}
\end{center}
\begin{definition}[Comonoid]
In a monoidal category, a comonoid is a triple $(A, \Delta, \varepsilon)$, where $A$ is an object and morphisms  $\Delta:A \to A \otimes A$ and $\varepsilon: A \to I$ satisfy the following equalities called coassociativity and counit laws
\begin{center}
\includegraphics[width=5.5cm]{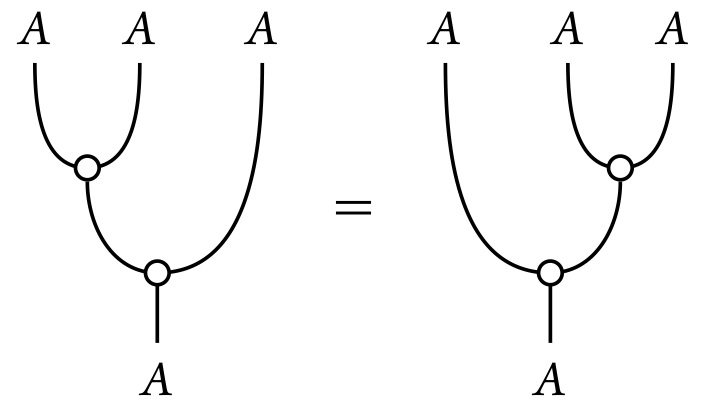}
\hspace{0.05\textwidth}
\includegraphics[width=5cm]{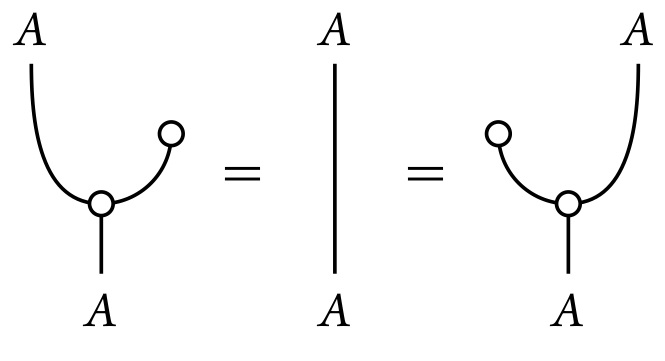}
\end{center}
If the monoidal category is a braided monoidal category, and the following holds
\begin{center}
\includegraphics[width=10cm]{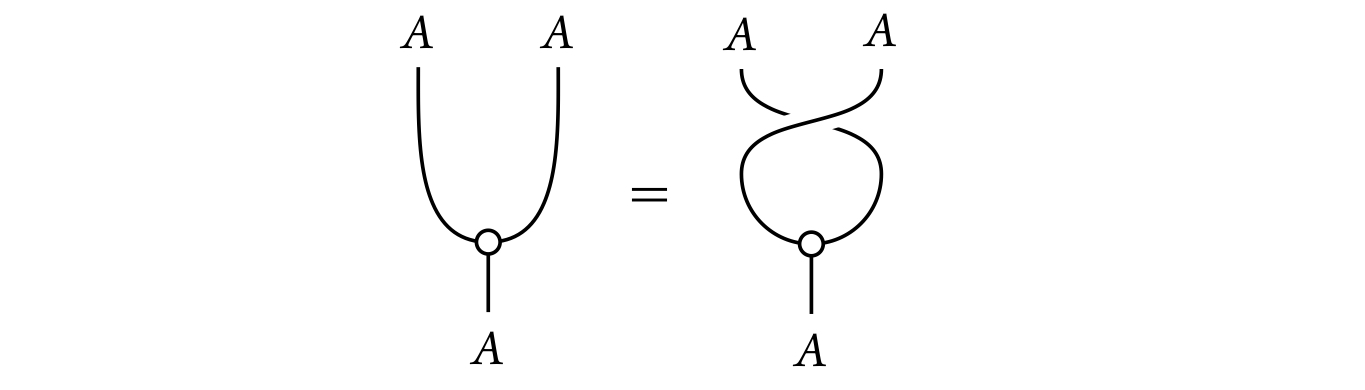}
\end{center}
then this comonoid is called cocommutative. The morphism $\Delta$ is called comultiplication, and the morphism $\varepsilon$ is called counit.
\end{definition}
\begin{definition}[Monoid]
In a monoidal category, a monoid is a triple $(A , \mu , \eta)$, where $A$ is an object, morphisms $\mu:A \otimes A \to A$ and state $\eta: I \to A$ satisfy the following equalities called associativity and unit laws:
\begin{center}
\includegraphics[width=6cm]{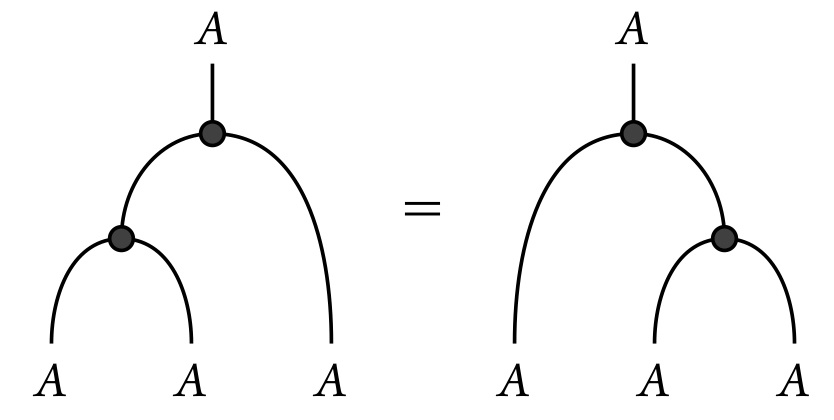}
\hspace{0.03\textwidth}
\includegraphics[width=5.5cm]{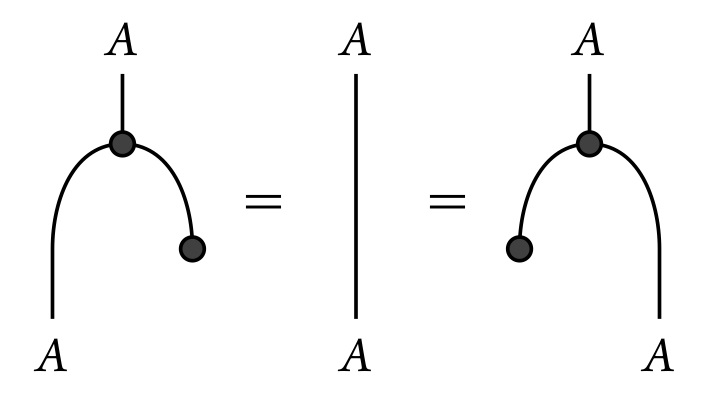}
\end{center}
The morphism $\mu$ is called multiplication, and the morphism $\eta$ is called unit.
\end{definition}
\begin{definition}[Frobenius structure in diagrams]
In a monoidal category, a Frobenius structure is a pair consisting of a comonoid  $(A,  \Delta_{W},  \varepsilon_{W})$ and a monoid $(A,  \Delta_{W},  \varepsilon_{W})$ satisfying the following equalities called Frobenius laws:
\begin{center}
\includegraphics[width=8cm]{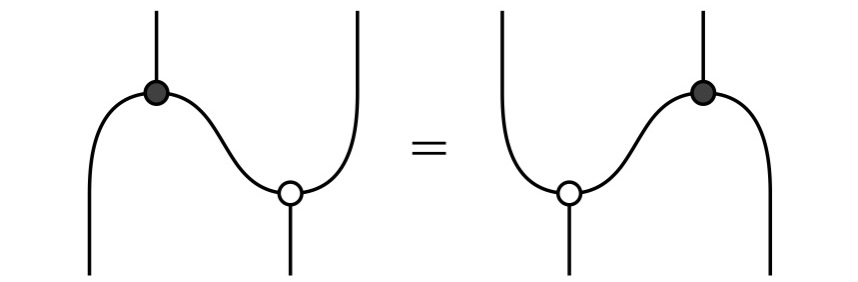}
\end{center}
\end{definition}
\begin{lemma}[Extended Frobenius laws]\label{lem2.1}
In a monoidal category, a Frobenius structure satisfies the following equalities:
\begin{center}
\includegraphics[width=9.5cm]{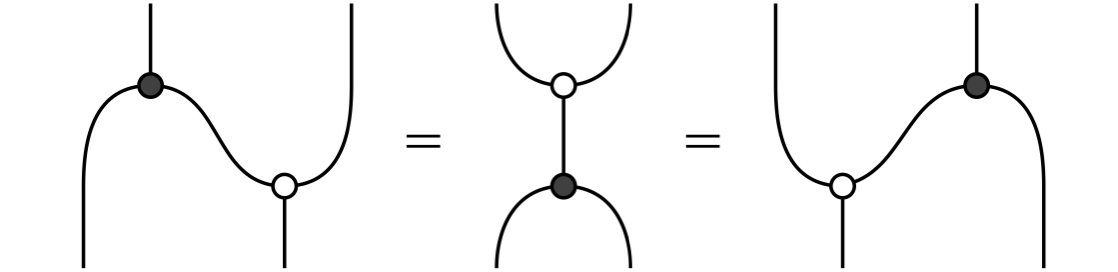}
\end{center}
\end{lemma}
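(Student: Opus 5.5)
The plan is to derive every equality in the lemma from only three ingredients: the Frobenius laws of the preceding definition, associativity and coassociativity, and the unit and counit laws. I will not use any Hilbert-space computation, since the lemma is stated for an arbitrary monoidal category. Write $\mu,\eta$ for the monoid and $\Delta,\varepsilon$ for the comonoid. I read the two Frobenius laws as
\[
(\mathrm{id}\otimes\mu)\circ(\Delta\otimes\mathrm{id})=\Delta\circ\mu=(\mu\otimes\mathrm{id})\circ(\mathrm{id}\otimes\Delta),
\]
so the extended laws are the remaining connected ``one merge, one split'' composites. These are the diagrams in which an input wire is bent around, using the cup $\Delta\circ\eta$ and the cap $\varepsilon\circ\mu$ that every Frobenius structure supplies.

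\textbf{Step 1: cup and cap satisfy the snake equations.} I first show that
\[
(\mathrm{id}\otimes\varepsilon\mu)\circ(\Delta\eta\otimes\mathrm{id})=\mathrm{id}=(\varepsilon\mu\otimes\mathrm{id})\circ(\mathrm{id}\otimes\Delta\eta).
\]
For the left equality, rewrite the middle as $(\mathrm{id}\otimes\mu)\circ(\Delta\otimes\mathrm{id})$ applied to $\eta\otimes\mathrm{id}$. By the first Frobenius law this equals $\Delta\circ\mu\circ(\eta\otimes\mathrm{id})$, which is $\Delta$ by the unit law. Post-composing with $\mathrm{id}\otimes\varepsilon$ then gives $\mathrm{id}$ by the counit law. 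The right equality is the mirror image, using the second Frobenius law. This step is essentially graphical bookkeeping.

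\textbf{Step 2: derive each extended equality.} Take a composite in which one leg of $\mu$ is fed by a cup, or one leg of $\Delta$ is closed off by a cap, for instance $(\varepsilon\mu\otimes\mathrm{id}\otimes\mathrm{id})\circ(\mathrm{id}\otimes\Delta\otimes\mathrm{id})\circ\cdots$. I first expand the cap as $\varepsilon\circ\mu$ and the cup as $\Delta\circ\eta$. I then use associativity, or coassociativity, to slide the merge so that two $\mu$'s become adjacent. This lets me collapse the cup–$\mu$ pair via the unit law, or the $\Delta$–cap pair via the counit law. The resulting diagram has exactly one $\mu$ and one $\Delta$. A single Frobenius law then brings it to the normal form $\Delta\circ\mu$, or to its bent version, in which a wire has been transported by a cup or cap. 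That bent version is straightened using Step 1.

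I will carry out one representative case in detail and obtain the others by the dagger/left–right reflection symmetry of the axioms. Reflecting left–right swaps the two Frobenius laws. Reflecting top–bottom swaps the monoid and comonoid axioms, and this is legitimate because the axiom set is closed under that reflection, with no dagger needed.

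\textbf{Main obstacle.} The hardest step will be choosing, for each bent diagram, which Frobenius law to apply first. The naive attempt can produce a composite with two $\mu$'s facing the wrong way, where neither associativity nor the unit law applies directly. My first move in such a case is to insert an identity of the form $(\mathrm{id}\otimes\varepsilon)\circ\Delta=\mathrm{id}$ at the problematic wire. This creates a configuration to which coassociativity applies, and the extra $\varepsilon$ is then absorbed with the counit law at the end.
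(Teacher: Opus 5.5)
The paper gives no proof of this lemma. It is quoted, figure included, from Heunen--Vicary, so the comparison has to be with the standard argument there. In that source the Frobenius law of the preceding definition is only the zigzag equation $L:=(\mathrm{id}\otimes\mu)\circ(\Delta\otimes\mathrm{id})=(\mu\otimes\mathrm{id})\circ(\mathrm{id}\otimes\Delta)=:R$. The extended laws are the assertion that this common composite also equals $\Delta\circ\mu$.

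Your proposal builds $L=\Delta\circ\mu=R$ into the axioms. So the one non-trivial equality the lemma supplies is assumed rather than proved; your Step 1 already uses it when it rewrites $L\circ(\eta\otimes\mathrm{id})$ as $\Delta\circ\mu\circ(\eta\otimes\mathrm{id})$. The bent cup/cap composites you then set out to derive are not what is claimed. The missing argument inserts a unit and applies the zigzag law twice around an associativity move:
\[
\begin{aligned}
\Delta\circ\mu &= R\circ(\eta\otimes\mu) = L\circ(\eta\otimes\mu)\\
&= (\mathrm{id}\otimes\mu)\circ\bigl((L\circ(\eta\otimes\mathrm{id}))\otimes\mathrm{id}\bigr)\\
&= (\mathrm{id}\otimes\mu)\circ\bigl((R\circ(\eta\otimes\mathrm{id}))\otimes\mathrm{id}\bigr) = L .
\end{aligned}
\]
The steps use, in order, the unit law, the Frobenius law, associativity, and the Frobenius law. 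The last step uses $R\circ(\eta\otimes\mathrm{id})=(\mu\otimes\mathrm{id})\circ(\eta\otimes\Delta)=\Delta$, which is the unit law again. Then $\Delta\circ\mu=R$ follows immediately from $L=R$. The ``main obstacle'' you anticipate is exactly this step. Your idea of inserting a counit also works, by the dual argument with coassociativity, but it has to be aimed at the equality $L=\Delta\circ\mu$.

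Even if we grant your stronger reading of the axioms, Step 2 is a recipe rather than a proof. No case is carried out, and the mechanism described does not match the actual computations. Take $(\mu\otimes\mathrm{id})\circ(\mathrm{id}\otimes\Delta\circ\eta)$: it contains a single $\mu$, and the unit sits under $\Delta$ rather than under $\mu$. So no reassociation brings ``two $\mu$'s together'' or lets the unit law act directly. It reduces to $\Delta$ by one Frobenius move followed by the unit law, exactly as in your Step 1. Note also that Step 1 can be run with $L=R$ alone, since $L\circ(\eta\otimes\mathrm{id})=R\circ(\eta\otimes\mathrm{id})=\Delta$. Once repaired that way, it does not depend on the equality you assumed.
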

\begin{theorem}[Non-commutative spider theorem]\label{thm2.2}
In a monoidal category, let $(A,  \mu_{B},  \eta_{B},$ $\Delta_{W},  \varepsilon_{W})$ be a Frobenius structure. For any connected morphism $A^{\otimes m} \rightarrow A^{\otimes n}$ built from finitely many $\mu_{B},  \eta_{B},  \Delta_{W},  \varepsilon_{W}$ and $\mathrm{id}$, connected by $\mu_{B},  \eta_{B},  \Delta_{W},  \varepsilon_{W}$, $\mathrm{id}$, $\circ$ and $\otimes$, it is equal to the following normal form:
\begin{center}
\includegraphics[width=7.5cm]{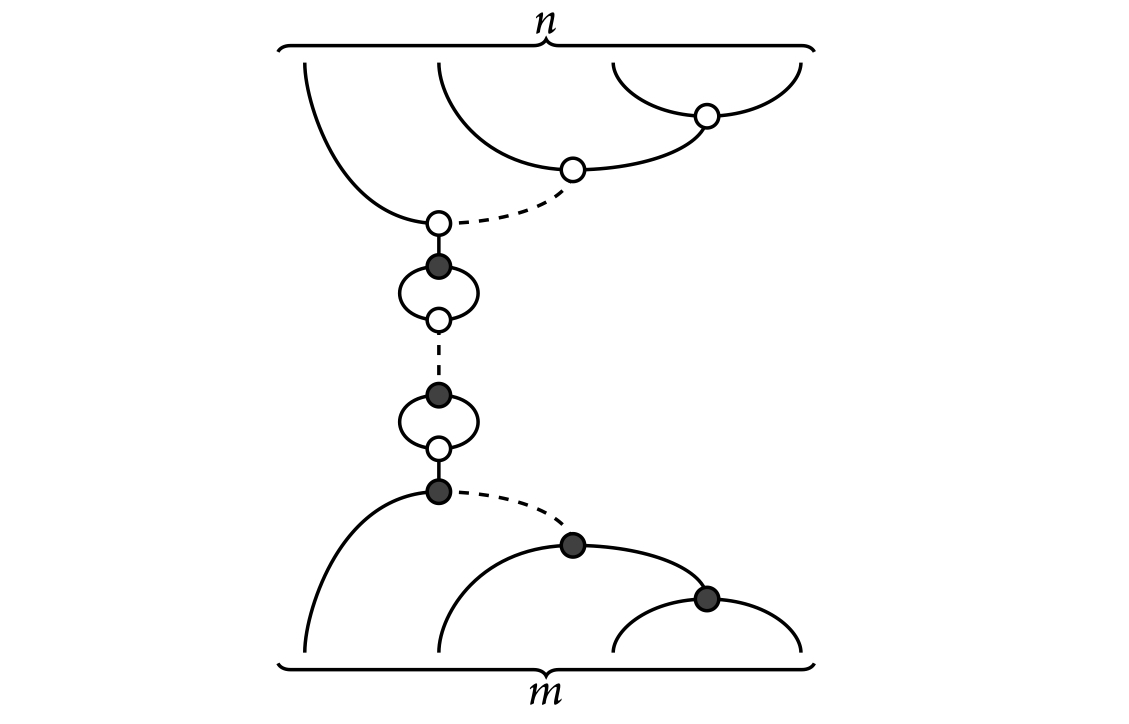}
\end{center}
\end{theorem}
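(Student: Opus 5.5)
We prove the statement by structural induction on the number of generators $\mu_B,\eta_B,\Delta_W,\varepsilon_W$ in a connected diagram $A^{\otimes m}\to A^{\otimes n}$. Throughout, we read the normal form in the figure as follows. First, a left-bracketed tree of $m-1$ multiplications merges the $m$ inputs; if $m=0$, a unit $\eta_B$ is used instead. Next comes a vertical stack of $k\ge 0$ ``loops'' $\mu_B\circ\Delta_W$. Finally, a right-bracketed tree of $n-1$ comultiplications produces the $n$ outputs; if $n=0$, a counit $\varepsilon_W$ is used instead. For each $m,n$ the normal form is therefore determined by the single number $k$. We expect $k$ to equal the first Betti number of the underlying graph of the diagram. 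The base cases are the four generators and the identity, which are normal forms with $k=0$.

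To keep every intermediate diagram connected, we do not add generators in parallel. Instead, we show that the class of normal forms is closed under two operations. The first is composing a single generator onto a pair of adjacent boundary wires, or onto one boundary wire; because the category is only monoidal and not symmetric, only adjacent wires can be joined. The second is gluing two connected normal forms along $j\ge 1$ adjacent wires. Any connected diagram can be assembled using only these two operations.

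The vertical reflection exchanges the roles of $(\mu_B,\eta_B)$ and $(\Delta_W,\varepsilon_W)$ and preserves the axioms: associativity and coassociativity swap, the unit and counit laws swap, and the Frobenius laws are self-dual. So it suffices to treat generators attached at the top (outputs); attachments at the bottom follow by the mirror-image argument.

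The case analysis for a top generator is as follows.

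\begin{itemize}
\item Attaching $\Delta_W$ to an output: coassociativity rebrackets the output tree.
\item Attaching $\varepsilon_W$ to an output with $n\ge 2$: the counit law deletes one branch of the output tree. If $n=1$ and $k=0$, the diagram is already in normal form. If $n=1$ and $k\ge1$, the counit sits on a loop; we rewrite it using the Frobenius law and the counit law. Here we expect $\varepsilon_W\circ\mu_B\circ\Delta_W$ not to simplify further, so the normal form must allow a counit capping the loop stack.
\item Attaching $\mu_B$ to two adjacent outputs: this is the essential case. If the two wires come from the same $\Delta_W$, then $\mu_B\circ\Delta_W$ is a new loop. Using coassociativity together with the extended Frobenius laws of Lemma~\ref{lem2.1}, we slide this loop down through the output tree until it joins the loop stack, which raises $k$ by one. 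If the two wires come from different $\Delta_W$ nodes, we first rebracket the output tree by coassociativity so that they become siblings. We then apply the Frobenius law to move the $\mu_B$ below the $\Delta_W$, which reduces the number of comultiplications above it.
\end{itemize}

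For gluing two normal forms along $j$ wires, the $\mu_B$-tree of the upper diagram meets the $\Delta_W$-tree of the lower one. We push the upper multiplications down past the lower comultiplications one at a time, using the Frobenius law. After the $j$ wires are absorbed, exactly $j-1$ new loops remain. These are collected into the stack using associativity, coassociativity and Lemma~\ref{lem2.1}.

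The main obstacle is the loop bookkeeping. We must show that a loop created anywhere, for example between two branches of a tree or adjacent to a unit or counit, can always be commuted into the single central stack. In the symmetric setting this is free, but here the planar structure must be respected. We would first prove an auxiliary identity: a loop $\mu_B\circ\Delta_W$ placed on one branch of $\mu_B$, or of $\Delta_W$, can be moved to the root. This follows from two applications of the Frobenius law and associativity. Iterating this identity reduces every configuration to the stacked form and confirms that $k$ is the cycle count, so the normal form is well defined.
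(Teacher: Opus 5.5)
The paper gives no proof of this statement. It is reproduced in the appendix as background from \cite{HV19}, so there is no in-paper argument to compare with, and your proposal has to stand on its own. In outline it does; the points below are the ones to tighten.

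\textbf{The algebra checks out.} The identities you need all follow from the axioms.
With $L=\mu_B\circ\Delta_W$, one Frobenius law plus coassociativity gives $\Delta_W\circ L=(L\otimes\mathrm{id})\circ\Delta_W=(\mathrm{id}\otimes L)\circ\Delta_W$. Dually, $L\circ\mu_B=\mu_B\circ(L\otimes\mathrm{id})=\mu_B\circ(\mathrm{id}\otimes L)$. So loops slide to the trunk.
For a gluing along $j$ wires, rebracket so that the glued wires form a subtree on each side. By induction, a $j$-leaf multiplication tree composed on a $j$-leaf comultiplication tree equals $L^{\,j-1}$, which matches your count of new loops.
The remaining one-wire gluing is resolved by one application of $(\mathrm{id}\otimes\mu_B)\circ(\Delta_W\otimes\mathrm{id})=\Delta_W\circ\mu_B$, once the shared wire meets both roots.
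The reflection step is legitimate: a Frobenius structure in $\mathbf{C}$ is one in $\mathbf{C}^{\mathrm{op}}$ with monoid and comonoid exchanged, and loops are sent to loops.

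\textbf{The assembly claim needs an argument.} You assert without justification that every connected diagram is built by your two operations. This is exactly where connectedness and planarity are used. Argue it by induction on the number of generators, writing $D=(\mathrm{id}\otimes g\otimes\mathrm{id})\circ D'$ with $g$ in the top layer.
If $g=\eta_B$, connectedness forces $D=\eta_B$.
If $g$ has one input, $D'$ is connected.
If $g=\mu_B$, $D'$ has at most two components, each meeting an input of $g$. By planarity, together with the coherence theorem for monoidal categories, they are either juxtaposed, or one is a state lying between two consecutive outputs of the other.
In every case $D$ is a one-wire gluing of one component with the connected diagram formed by $\mu_B$ and the other component.
Note also that ``connected'' must exclude the empty diagram, for which the claim fails in general.

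\textbf{Minor points.} Two of your subcases are muddled but harmless.
Once coassociativity makes two adjacent outputs siblings, $\mu_B$ on them is just a new loop. So there is no separate ``different $\Delta_W$ nodes'' case and no Frobenius move there; that law needs exactly one shared wire.
Capping an $n=1$ normal form with $\varepsilon_W$ needs no rewriting, since the result is already the $n=0$ normal form.
The Betti-number remark is true, since each axiom preserves $b_1$ of a connected diagram. It is unnecessary, though, because the theorem only asserts existence of the normal form.
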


For Frobenius structures, their various properties guarantee that different graphical transformations can be performed during topological deformations, allowing tensor diagrams to be deformed into desired forms and structures. Below we introduce several very important properties that appear frequently and play key roles in this paper.
\begin{definition}[Special]
In a monoidal category, a pair consisting of a monoid $(A,  \mu_{B},  \eta_{B})$ and a comonoid $(A,  \Delta_{W},  \varepsilon_{W})$ is called special when $\mu_{B}$ is a left inverse of $\Delta_{W}$:
\begin{center}
\includegraphics[width=7cm]{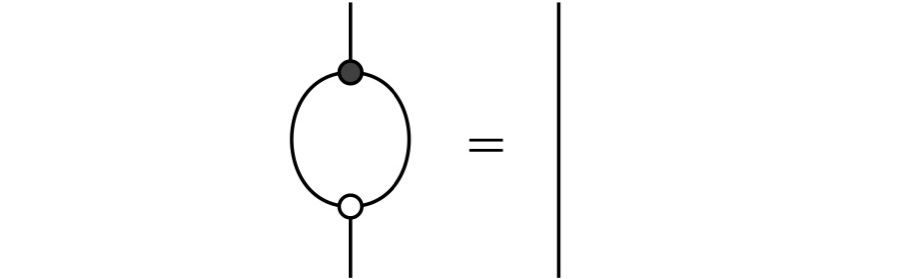}
\end{center}

Note that specialness and the Frobenius  law are the only two canonical interactions possible between a single multiplication and comultiplication.
\end{definition}
\begin{definition}[$\mathrm{dagger}$\ Frobenius structure]
In a monoidal dagger category, a Frobenius structure $(A,  \mu_{B},  \eta_{B},  \Delta_{W},  \varepsilon_{W})$ is called a dagger Frobenius structure when $\mu_{B}=(\Delta_{W})^{\dagger}$ and $\eta_{B}=(\varepsilon_{W})^{\dagger}$.
\end{definition}
\begin{definition}[Classical structure]
In a braided monoidal category, a classical structure is a special and commutative dagger Frobenius structure.
\end{definition}

Next we introduce a very important theorem, which reveals the isomorphism between Frobenius structures in $\mathbf{FHilb}$ and bases of finite-dimensional Hilbert spaces. Through this theorem, in the proofs of the following three sections we can not only use graphical language for topological transformations, but also introduce algebraic operations at appropriate times to simplify or complete our proofs and obtain desired results.
\begin{theorem}\label{thm2.3}
For a fixed finite-dimensional Hilbert space in $\mathbf{FHilb}$, there exists a bijection between:
$\bullet$ Orthogonal bases and commutative dagger Frobenius structures;
$\bullet$ Orthonormal bases and classical structures;\\
Specifically:
$\bullet$ Given a commutative dagger Frobenius structure $(A, \mu_{G}, \eta_{G})$, the corresponding basis consists of vectors $a \in A$ satisfying:
\begin{center}
\includegraphics[width=7cm]{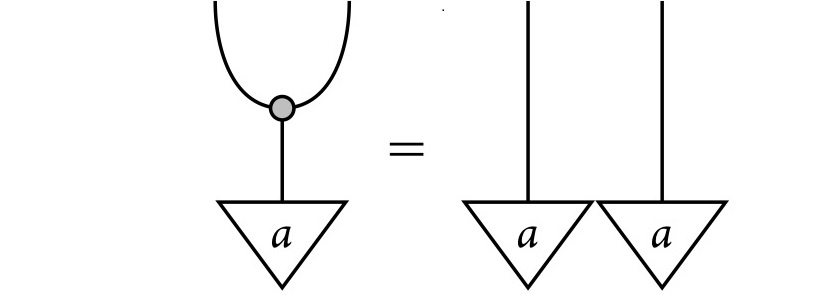}
\end{center}

$\bullet$ Given a basis $\{a_1,  \ldots, a_n\}$ of $A$. The commutative dagger Frobenius structure on $A$ can be defined as:
\begin{center}
\includegraphics[width=10cm]{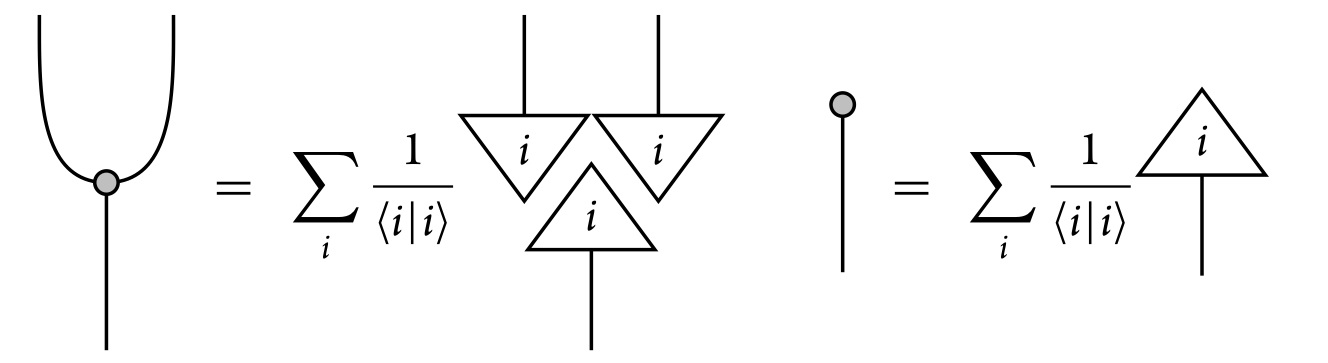}
\end{center}
\begin{center}
\includegraphics[width=10cm]{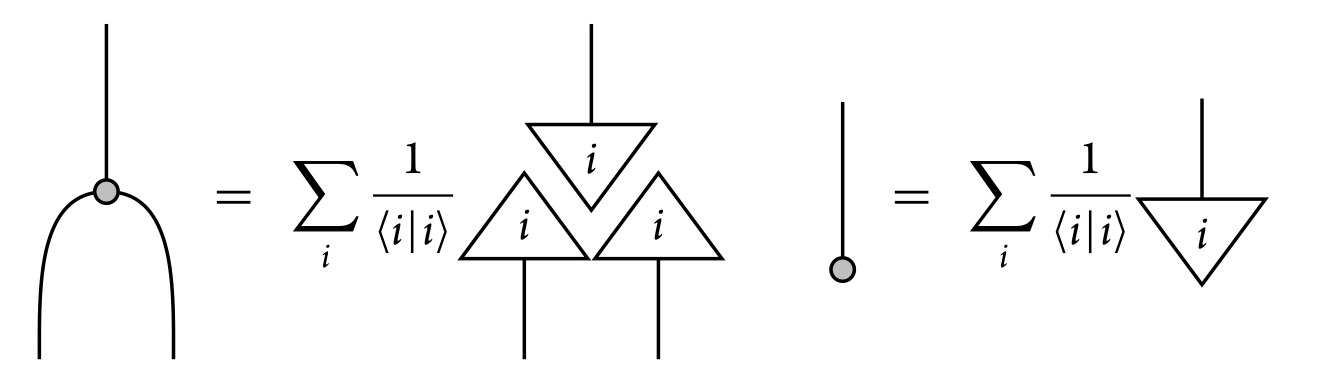}
\end{center}
\end{theorem}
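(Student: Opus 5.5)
The plan is to prove the two directions of the correspondence separately and then check that the two constructions are mutually inverse. The orthonormal case will come out as the special case of the orthogonal one.

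\emph{From a basis to a structure.} Let $\{a_1,\ldots,a_n\}$ be an orthogonal basis of $A$. I would define the comultiplication and counit on this basis by $\Delta(a_i)=a_i\otimes a_i$ and $\varepsilon(a_i)=1$, extended linearly. I would then set $\mu=\Delta^{\dagger}$ and $\eta=\varepsilon^{\dagger}$. Computing the adjoint against the orthogonal basis gives
\[
\mu(a_i\otimes a_j)=\delta_{ij}\,\|a_i\|^{2}\,a_i ,
\qquad
\eta=\sum_i \|a_i\|^{-2}a_i ,
\]
so the monoid is pointwise multiplication in this basis, up to the norm factors.

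The checks are then all one-line evaluations on basis vectors:
\begin{itemize}
\item coassociativity, the counit law and cocommutativity hold because both sides send $a_i$ to $a_i^{\otimes 3}$, $a_i$ and $a_i\otimes a_i$ respectively;
\item the Frobenius law (Lemma \ref{lem2.1}) holds because both sides send $a_i\otimes a_j$ to $\delta_{ij}\|a_i\|^{2}\,a_i\otimes a_i$;
\item the dagger condition holds by construction.
\end{itemize}
For the orthonormal refinement, $\mu\circ\Delta(a_i)=\|a_i\|^{2}a_i$. Hence the structure is special exactly when every $\|a_i\|=1$, which is the classical-structure clause.

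\emph{From a structure to a basis.} This is the main obstacle. Start from a commutative dagger Frobenius structure $(A,\mu,\eta)$. I would first show that $(A,\mu,\eta)$ is a finite-dimensional commutative $C^*$-algebra, following Coecke--Pavlovic--Vicary, in three steps.
\begin{enumerate}
\item Use the cup and cap $\varepsilon\circ\mu$ and $\Delta\circ\eta$ to define an antilinear involution $a\mapsto a^{*}$, as the transpose of the dagger.
\item Show that the regular representation $a\mapsto\mu(a\otimes-)$ into $B(A)$ is an injective $*$-homomorphism. The key identity is $\mu(a\otimes-)^{\dagger}=\mu(a^{*}\otimes-)$, which comes from the dagger condition together with the Frobenius law, drawn diagrammatically.
\item Conclude that $A$ is a $*$-subalgebra of $B(A)$, hence semisimple and isomorphic to $\mathbb{C}^{n}$.
\end{enumerate}
It therefore has orthogonal minimal idempotents $e_1,\ldots,e_n$, orthogonal in the inner product of $A$ by the $C^*$-identity. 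I would then show that the copyable states, i.e. the states $a$ with $\Delta(a)=a\otimes a$ and $\varepsilon(a)=1$, are exactly suitably rescaled $e_i$. These form an orthogonal basis.

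\emph{Bijectivity.} Both constructions are mutually inverse because a comultiplication is determined by its values on a basis. Starting from a basis, its copyable vectors are precisely the original $a_i$: if $\Delta(\sum c_i a_i)=(\sum c_i a_i)^{\otimes 2}$, then comparing coefficients forces all but one $c_i$ to vanish and that one to equal $1$. Conversely, the copyables of a structure form a basis on which $\Delta$ acts by $a_i\mapsto a_i\otimes a_i$, so the structure is recovered from them.
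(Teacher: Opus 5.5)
The paper gives no proof of this theorem. It is recalled in the appendix as background from Heunen--Vicary, and your outline is essentially the standard Coecke--Pavlovic--Vicary argument given there: a direct check for basis $\Rightarrow$ structure, and for structure $\Rightarrow$ basis the regular representation, then a commutative finite-dimensional C*-algebra, then minimal idempotents. The outline is correct, but two steps are only gestured at and should be written out.

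First, the orthogonality of the $e_i$ in the Hilbert inner product does not come from the C*-identity as such. It comes from $R$ being a $*$-homomorphism. Since $*$ permutes the minimal idempotents, and $R(e_i)^\dagger R(e_i)=R(e_i^*e_i)\neq 0$ by injectivity, you get $e_i^*=e_i$. Then $\langle e_i,e_j\rangle=\langle R(e_i)\eta,R(e_j)\eta\rangle=\langle \eta,R(e_ie_j)\eta\rangle=\delta_{ij}\|e_i\|^2$.

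Second, the ``suitable rescaling'' can be made explicit. From $\Delta=\mu^\dagger$ and this orthogonality, $\Delta(e_i)=\|e_i\|^{-2}\,e_i\otimes e_i$. Hence the nonzero copyable states are exactly $a_i=e_i/\|e_i\|^2$, and $\varepsilon(a_i)=\langle\eta,a_i\rangle=1$ because $\eta=\sum_j e_j$. This is what establishes that the copyables form a basis, and it closes your bijectivity step.
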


Finally, in the basics of category theory, we introduce the concept of complementary Frobenius structures.
\begin{definition}[Complementary bases]
For a finite-dimensional Hilbert space $H$, two orthogonal bases $\{a_i\}$ and $\{b_j\}$ are called complementary or unbiased if there exists a constant $s \in \mathbb{C}$ such that for all $i, j$, the following holds:
$$
\left\langle a_i \mid b_j\right\rangle\left\langle b_j \mid a_i\right\rangle=s
$$
In other words, the inner products have constant absolute value.
\end{definition}
\begin{definition}[Complementary Frobenius structures]
In a braided monoidal dagger category, two symmetric dagger Frobenius structures $\mu_{B}$ and $\mu_{W}$ on the same object are called complementary if the following equality holds:
\begin{center}
\includegraphics[width=10cm]{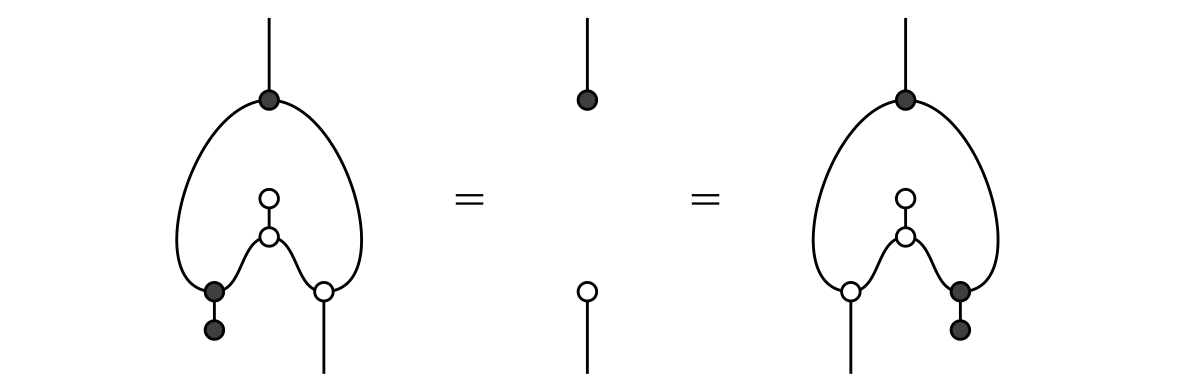}
\end{center}
The roles of black and white dots in the above definition are not obviously interchangeable. However, since the Frobenius structures are symmetric, the following rearrangement is possible:
\begin{center}
\includegraphics[width=11cm]{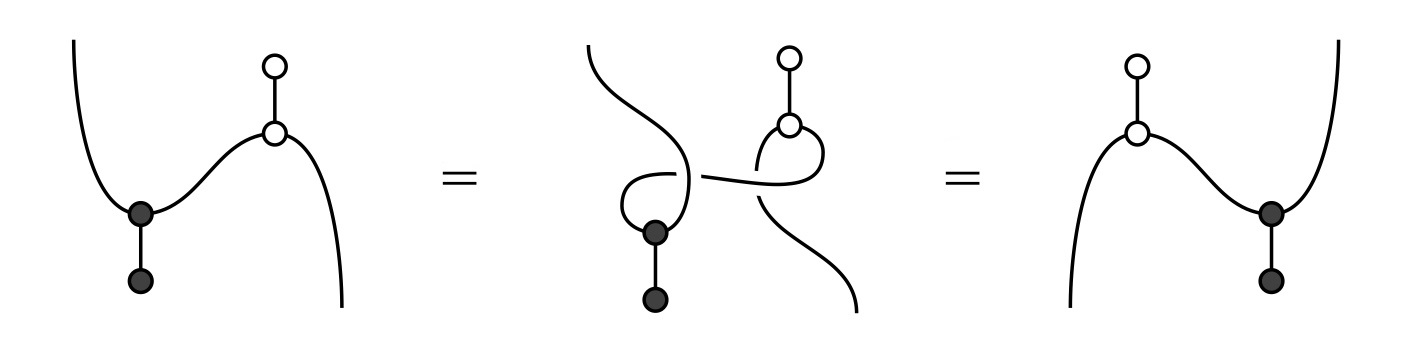}
\end{center}
Using these equations and the dagger, we can see that ``black is complementary to white" is equivalent to ``white is complementary to black".
\end{definition}
\begin{proposition}[Complementarity in $\mathbf{FHilb}$]\label{prop2.4}
In $\mathbf{FHilb}$, for two commutative dagger Frobenius structures on the same object, the following conditions are equivalent:

$\bullet$ As Frobenius structures, they are complementary;

$\bullet$ As bases, they are complementary bases with constant $s=1$.
\end{proposition}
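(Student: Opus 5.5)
The plan is to reduce the diagrammatic condition to a statement about matrix entries. Theorem~\ref{thm2.3} is the dictionary: it identifies each commutative dagger Frobenius structure with an orthogonal basis. So I fix the black structure with basis $\{a_i\}$ and the white structure with basis $\{b_j\}$, both orthogonal; these bases need not be normalized, as the $\sqrt{2}\mathrm{X}$ example shows. Both sides of the complementarity equation are linear maps between finite-dimensional spaces. Since the $a_i$ form a basis, it suffices to compare the two sides after plugging in the states $a_i$ on the inputs and testing against effects on the outputs. I would choose the effects to be built from the dual basis, so that the black and white dots act by their basis formulas.

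\emph{First step.} I push the black-copyable states $a_i$ through the black (co)multiplications using the copying property from Theorem~\ref{thm2.3}. The black dots then disappear, leaving a diagram with only white dots and inserted copies of $a_i$ and $a_i^\dagger$. I then expand each white dot using the explicit formulas of Theorem~\ref{thm2.3}: $\Delta_W(b_j)=b_j\otimes b_j$, and $\mu_W$ and $\varepsilon_W$ are their daggers, suitably normalized. After this expansion the loop formed in the complementarity diagram collapses to a sum over $j$ of products of the form $\langle a_i\mid b_j\rangle\langle b_j\mid a_i\rangle$. The symmetric rearrangement recorded after the definition of complementary Frobenius structures lets me put the diagram in a form where this loop is visible. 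I would use the spider theorem (Theorem~\ref{thm2.2}) to merge adjacent same-coloured dots first, so that only one alternating loop remains.

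\emph{Second step.} I read off both directions.
\begin{itemize}
\item If $|\langle a_i\mid b_j\rangle|^2=1$ for all $i,j$, each term in the sum becomes independent of $i$ and $j$. The sum then reproduces the right-hand side, which is the disconnected counit/unit diagram, evaluated on the same test vectors.
\item Conversely, if the diagram equation holds, I use the free choice of test vectors: pairing the output with a white copyable effect $b_j^\dagger$ isolates a single term. This forces $\langle a_i\mid b_j\rangle\langle b_j\mid a_i\rangle$ to equal the scalar contributed by the right-hand side, which is $1$ under the paper's normalizations. This gives unbiasedness with $s=1$.
\end{itemize}

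\emph{Main obstacle.} The hard part is the scalar bookkeeping. For non-normalized orthogonal bases, the white unit and counit and the "special" condition carry factors $\langle b_j\mid b_j\rangle^{\pm1}$. Getting the constant to come out as exactly $s=1$, rather than some $\dim A$-dependent value, depends on precisely which normalization Theorem~\ref{thm2.3} assigns to $\mu_W$. I would first do the computation in the test case $A=\mathbb{C}^2$ with the computational and $\sqrt{2}\mathrm{X}$ bases, where $|\langle i\mid \pm\rangle|^2=1$. That calibrates the factors before running the general argument.
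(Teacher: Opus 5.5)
The paper gives no proof of this proposition; it is quoted in the appendix from Heunen--Vicary. Your outline is the standard argument: feed in the copyable states $a_i$ of one structure, use that $a_i$ bent through the black cap becomes $a_i^\dagger$ (cf.\ Lemma~\ref{lem2.5}), expand the white dots, and test against the copyable effects $b_j^\dagger$. Your plan is correct.

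The ``main obstacle'' you flag is not a real one, because Theorem~\ref{thm2.3} already fixes every scalar. The counit law forces $\varepsilon_W(b_j)=\varepsilon_B(a_i)=1$, and then $\mu_W=\Delta_W^\dagger$, $\eta_W=\varepsilon_W^\dagger$ give
\[
\mu_W(b_j\otimes b_k)=\delta_{jk}\langle b_j|b_j\rangle\,b_j,\qquad
\eta_W=\sum_j\frac{b_j}{\langle b_j|b_j\rangle},\qquad
\varepsilon_B\mu_B(a_i\otimes x)=\langle a_i|x\rangle .
\]
Hence the left-hand side sends $a_i$ to $\sum_j\langle a_i|b_j\rangle\langle b_j|a_i\rangle\,b_j/\langle b_j|b_j\rangle$. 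The bent wire is what supplies the conjugate factor $\langle a_i|b_j\rangle$ rather than a second copy of $\langle b_j|a_i\rangle$. The right-hand side sends $a_i$ to $\varepsilon_B(a_i)\,\eta_W=\sum_j b_j/\langle b_j|b_j\rangle$.

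The norms cancel, so comparing coefficients in the basis $\{b_j\}$ gives both implications at once, with constant exactly $1$. No dependence on $\dim A$ can arise, and the calibration on $\mathbb{C}^2$ is unnecessary.
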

\begin{lemma}\label{lem2.5}
In $\mathbf{FHilb}$, for a commutative dagger Frobenius structure, any copyable state a  satisfies the following equation:
\begin{center}
\includegraphics[width=10cm]{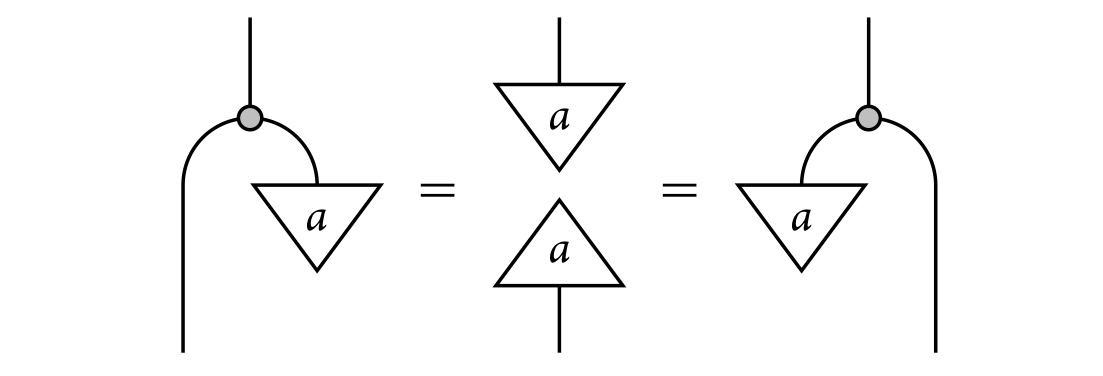}
\end{center}
\end{lemma}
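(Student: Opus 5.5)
The plan is to reduce the pictured equation to two facts about copyable states: they are normalised against the counit, and they are basis vectors for the structure.

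\textbf{Step 1 (the counit fixes the scale).} Start from the defining condition of a copyable state $a$, namely $\Delta_W\circ a = a\otimes a$. Post-compose both sides with $\varepsilon_W\otimes \mathrm{id}$. The left side becomes $a$ by the counit law of the comonoid $(A,\Delta_W,\varepsilon_W)$. The right side becomes the scalar $(\varepsilon_W\circ a)$ times $a$, since an effect plugged into a state is a morphism $I\to I$ and floats freely in the diagram. Hence $a=(\varepsilon_W\circ a)\,a$, so either $a=0$ or $\varepsilon_W\circ a=1$. Both sides of the equation in Lemma \ref{lem2.5} are multilinear in $a$, so the case $a=0$ is trivial. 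From now on I assume $a\neq 0$ and that $a$ is deletable.

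\textbf{Step 2 (the copyable states are the basis vectors).} Invoke Theorem \ref{thm2.3}. The commutative dagger Frobenius structure corresponds to an orthogonal basis $\{a_i\}$, with $\Delta_W(a_i)=a_i\otimes a_i$ and $\mu_B$ its adjoint. Expand $a=\sum_i \lambda_i a_i$. The copy condition gives $\sum_i\lambda_i\, a_i\otimes a_i=\sum_{i,j}\lambda_i\lambda_j\, a_i\otimes a_j$. Comparing coefficients in the product basis $\{a_i\otimes a_j\}$ forces $\lambda_i\lambda_j=0$ for $i\neq j$ and $\lambda_i^2=\lambda_i$. So exactly one $\lambda_i$ equals $1$ and the others vanish, i.e.\ $a=a_i$ for some $i$.

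\textbf{Step 3 (check the equation on a basis vector).} Specialise the diagram to $a=a_i$ and evaluate it directly. I expect it expresses the dagger of $a$, the effect $a^\dagger$, in terms of $a$ bent through the cup/cap built from $\eta_B$, $\mu_B$ (or $\Delta_W$, $\varepsilon_W$), possibly up to a scalar $\langle a\mid a\rangle$. Using the explicit formulas of Theorem \ref{thm2.3}, the cap is $\varepsilon_W\circ\mu_B$, which sends $a_j\otimes a_k\mapsto\delta_{jk}$ up to the normalisation of the orthogonal basis. Feeding $a_i$ into one leg gives the effect $a_j\mapsto\delta_{ij}$. This agrees with $a_i^\dagger$ rescaled by the norm, because the basis is orthogonal. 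If the pictured equation is instead the deletability statement, Step 1 already finishes it.

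The main obstacle is Step 3, specifically getting the scalar normalisations right. Here the basis is only orthogonal, not orthonormal, and the structure need not be special. So the cap picks up a factor $1/\langle a_i\mid a_i\rangle$ or its inverse, and I would track this carefully via the formulas of Theorem \ref{thm2.3}. If possible, I would instead argue purely diagrammatically: precompose the candidate identity with each copyable state $a_j$, use copyability together with the Frobenius law (Lemma \ref{lem2.1}) on both sides, and conclude by the fact, from Step 2, that the copyable states form a basis, so that maps agreeing on them are equal.
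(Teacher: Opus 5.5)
The paper states this lemma without proof; it is quoted from Heunen--Vicary. In their formulation, the pictured equation says that bending a copyable state $a$ through the Frobenius cap gives exactly its adjoint: $\varepsilon_W\circ\mu_B\circ(a\otimes\mathrm{id})=a^\dagger$, or equivalently $(a^\dagger\otimes\mathrm{id})\circ\Delta_W\circ\eta_B=a$. I cannot see the figure, so this reading is inferred from the hypotheses and from how the lemma is used for Proposition \ref{prop2.4}.

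Your Steps 1--2 are correct and are the natural reduction. Step 1 is in fact subsumed by Step 2, which already gives $a\in\{0,a_1,\dots,a_n\}$, with $\varepsilon_W(a_i)=1$ built into the classification.

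\textbf{The gap is Step 3.} It is the only nontrivial step, and your sketch of it is wrong. You suppress the norms and then expect a leftover factor $1/\langle a_i|a_i\rangle$ or its inverse. Track them instead. The structure fixed by $\Delta_W(a_i)=a_i\otimes a_i$ and $\varepsilon_W(a_i)=1$ has, by taking daggers and using orthogonality,
$\mu_B(a_j\otimes a_k)=\delta_{jk}\|a_j\|^2a_j$ and $\eta_B=\sum_j a_j/\|a_j\|^2$.
Hence:
\begin{itemize}
\item the cap sends $a_j\otimes a_k\mapsto\delta_{jk}\|a_j\|^2$;
\item $\varepsilon_W\circ\mu_B\circ(a_i\otimes\mathrm{id})$ sends $a_j\mapsto\delta_{ij}\|a_i\|^2=\langle a_i|a_j\rangle$, which is $a_i^\dagger$ exactly;
\item dually, $(a_i^\dagger\otimes\mathrm{id})\circ\Delta_W\circ\eta_B=\sum_j\langle a_i|a_j\rangle a_j/\|a_j\|^2=a_i$.
\end{itemize}
The same cancellation gives $\mu_B\circ(a_i\otimes\mathrm{id})=a_i\circ a_i^\dagger$, in case the picture is that variant. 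The norm produced by $\mu_B=\Delta_W^\dagger$ is exactly the norm in $a_i^\dagger(a_i)$, so the two cancel.

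That exactness is the whole content of the lemma, for a merely orthogonal, non-special structure such as the $\sqrt2\mathrm{X}$ basis. It is also what produces the constant $s=1$ in Proposition \ref{prop2.4}. As written, your Step 3 would lead you to an identity with a spurious scalar, which is false whenever $\|a_i\|\neq1$.

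Your diagrammatic fallback does not avoid this bookkeeping. Dagger-copyability gives $a_k^\dagger\circ\mu_B\circ(a_i\otimes a_j)=\langle a_k|a_i\rangle\langle a_k|a_j\rangle$. To reach $\varepsilon_W\circ\mu_B\circ(a_i\otimes a_j)$ you still have to express $\varepsilon_W$ through the $a_k^\dagger$. That needs $\varepsilon_W=\sum_k a_k^\dagger/\|a_k\|^2$, which is exactly the normalisation computation above.

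Finally, the ``deletability'' reading cannot be the intended statement. The zero state is copyable but not deletable. That claim would also need neither commutativity, nor the dagger, nor $\mathbf{FHilb}$, which makes those hypotheses pointless.
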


\end{document}